\theoremstyle{plain}
\newtheorem{theorem}{Theorem}
\newtheorem{lemma}{Lemma}
\newtheorem{proposition}{Proposition}
\theoremstyle{definition}
\theoremstyle{remark}
\newcommand{\reals}{\mathbb{R}}
\newcommand{\vect}[1]{\mbox{\boldmath$#1$}}
\title{Incremental Network Design with Maximum Flows}
\author{
Thomas Kalinowski$^\dagger$\and Dmytro Matsypura$^\star$\and Martin W.P. Savelsbergh$^\dagger$\\[.5ex]
$^\dagger${\it University of Newcastle, Australia}\qquad $^\star${\it University of Sydney, Australia}
}
\begin{document}

\maketitle

\begin{abstract}
We study an incremental network design problem, where in each time
period of the planning horizon an arc can be added to the network
and a maximum flow problem is solved, and where the objective is to
maximize the cumulative flow over the entire planning horizon. After
presenting two mixed integer programming (MIP) formulations for this
NP-complete problem, we describe several heuristics and prove
performance bounds for some special cases. In a series of
computational experiments, we compare the performance of the MIP
formulations as well as the heuristics.
\end{abstract}

\section{Introduction}\label{sec:intro}

In the planning process for many network infrastructures, when the
network is constructed over a significant period of time, the
properties of the intermediate partial networks have to be taken
into account. \emph{Incremental network design}, introduced in
\cite{baxter2014incremental}, represents a class of optimization
problems capturing that feature and combining two types of
decisions: which arcs should be added to a given network in order to
achieve a certain goal, and when should these arcs be added?

Variants of this problem have been studied in diverse contexts, for
instance, the design of transportation
networks~\cite{kim2008sequencing,UkPa2009}, network infrastructure
restoration after disruptions due to natural disasters
~\cite{cavdaroglu2013integrating,lee2007restoration,lee2009network},
and the transformation of an electrical power grid into a smart
grid~\cite{mahmood2008design,momoh2009smart}.

A general class of mathematical optimization problems that captures
essential features of the described decision problems, and includes
the problem discussed in this paper as a special case, was
introduced in~\cite{nurre2012restoring,nurre2013integrated}, where
an \emph{integrated network design and scheduling} problem is
specified by (1) a scheduling environment that describes the
available resources for adding arcs to the network, (2) a
performance measure, which prescribes how a given network is
evaluated (for instance by the shortest $s$-$t$ path or by the
maximum $s$-$t$ flow in the network), and (3) by the optimization
goal, which is either to reach a certain level of performance as
quickly as possible or to optimize the cumulative performance over
the entire planning horizon.

We focus on the special case corresponding to incremental network
design as introduced in~\cite{baxter2014incremental}. Our
scheduling environment is such that at most one arc can be added to
the network in each time period, and we optimize the cumulative
performance, i.e., the sum of the performance measures of the
networks in all time periods. Even in this simple setting, the
problem has been shown to be NP-complete for classical network
optimization problems: for the shortest $s$-$t$ path problem
in~\cite{baxter2014incremental}, and for the maximum $s$-$t$ flow
problem in~\cite{nurre2013integrated}. Interestingly, the
incremental variant of the minimum spanning tree problem can be
solved efficiently by a greedy
algorithm~\cite{engel2013incremental}, while it becomes NP-complete
in the more general setup of~\cite{nurre2013integrated}. The
performance measure considered in this paper is the value of a
maximum $s$-$t$ flow.

In Section~\ref{sec:problem}, we introduce notation, state the
problem precisely, and present two MIP formulations. In
Section~\ref{sec:heuristics}, we describe three heuristics, the
first one seeks to increment the flow as quickly as possible, the
second seeks to reach a maximum flow as quickly as possible, and the
third one is a hybrid of the first two. In
Section~\ref{sec:unit_capacity}, we prove performance guarantees for
the first two heuristics in special cases: for unit capacity
networks they provide a 2-approximation algorithm and a
$3/2$-approximation algorithm, respectively, and these bounds can be
strengthened when the network has a special structure.
Section~\ref{sec:computations} discusses the results of a set of
computational experiments using randomly generated instances. After
describing the instance generation, we compare the performance of
the two MIP formulations, and evaluate and compare the heuristics on
hard instances. We end, in Section~\ref{sec:final}, with some final
remarks.

\section{Problem formulation}\label{sec:problem}

We are given a network $D=(N,A)$ with node set $N$ and arc set
$A=A_e\cup A_p$, where $A_e$ contains existing arcs and $A_p$
contains potential arcs, as well as a source $s\in N$ and sink $t\in
N$. For each arc, we are given an integer capacity $u_a>0$, and for every
node $v\in N$, we denote with $\delta^{\text{in}}(v)$ and
$\delta^{\text{out}}(v)$ the set of arcs entering $v$ and the set of
arcs leaving $v$, respectively. Let $T>\lvert A_p\rvert$  be the
length of the planning horizon. In every period, we have the option
to expand the useable network, which initially consists of only the
existing arcs, by ``building'' a single potential arc $a\in A_p$,
which will be available for use in the following time period. In
every period, the value of a maximum $s$-$t$ flow is recorded (using
only useable arcs, i.e., existing arcs and potential arcs that have
been built in previous periods). The objective is to maximize the
total flow over the planning horizon. Note that the length of the
planning horizon ensures that every potential arc can be built. We
refer to a maximum $s$-$t$ flow using only existing arcs as an
\emph{initial maximum flow}, and to a maximum flow for the complete
network as an \emph{ultimate maximum flow}. This problem is strongly
NP-hard~\cite{nurre2013integrated} even when restricted to instances
where every existing arc has capacity $1$ and every potential arc
has capacity $3$. (A simple proof of this result can be found in the
appendix.)

The problem can be formulated as a mixed integer program. For every
$a\in A$ and $k \in \{1,\ldots,T\}$, we have a flow variable $x_{ak}
\geqslant 0$, and for every $a\in A_p$ and $k \in \{1,\ldots,T\}$,
we have a binary variable $y_{ak}$ which indicates if arc $a$ is
built before period $k$ ($y_{ak} = 1$) or not ($y_{ak} = 0$). The
incremental maximum flow problem is then
\begin{equation*}
\max\ \sum_{k=1}^T\left(\sum_{a\in
\delta^{\text{out}}(s)}x_{ak}-\sum_{a\in
\delta^{\text{in}}(s)}x_{ak}\right)
\end{equation*}
subject to
\begin{align*}
&& \sum_{a\in\delta^{\text{out}}(v)}x_{ak}-\sum_{a\in\delta^{\text{in}}(v)} x_{ak} &=0 && \text{for }v\in N\setminus\{s,t\},\ k\in\{1,\ldots,T\}, \\
&& x_{ak} &\leqslant u_a && \text{for }a\in A_e,\ k\in\{1,\ldots,T\},\\
&& x_{ak} &\leqslant u_ay_{ak} && \text{for }a\in A_p,\ k\in\{1,\ldots,T\},\\
&& y_{ak} &\geqslant y_{a,k-1} && \text{for }a\in A_p,\ k\in\{2,\ldots,T\},\\
&& y_{a1} &= 0 && \text{for }a\in A_p,\\
&& \sum_{a\in A_p}(y_{ak}-y_{a,k-1}) &\leqslant 1 && \text{for }k\in\{2,\ldots,T-1\},\\
&& x_{ak} &\geqslant 0 && \text{for }a\in A,\ k\in\{1,\ldots,T\},\\
&& y_{ak} &\in\{0,1\} && \text{for }a\in A_p,\ k\in\{1,\ldots,T\}.
\end{align*}
We denote this formulation by IMFP$^1$.

A potential weakness of IMFP$^1$ is that it may suffer from
symmetry. If multiple arcs need to be build to increase the maximum
$s$-$t$ flow, then the order in which these arcs are build does not
matter, which introduces alternative, symmetrical solutions. Next,
we present an alternative MIP formulation which avoids this
difficulty. Let $f$ and $F$ denote the initial and the ultimate
maximum flow value, respectively, and let $r = F - f$. We introduce
binary variables $y_{ak}$ for $a \in A_p$ and $k = 1,2, \ldots, r$
with the interpretation
\[y_{ak}=\begin{cases}
1 & \text{if arc $a$ is build while the max flow value is less than $f+k$},\\
0 & \text{otherwise}.
\end{cases}\]
The number of time periods with maximum flow value $f$ is
$\sum_{a\in A_p}y_{a1}$, and for $k=1,\ldots,r-1$, the number of
time periods with maximum flow value $f+k$ is $\sum_{a\in
A_p}(y_{a,k+1}-y_{ak})$. Consequently, the total flow is
\begin{multline*}
f \sum_{a\in A_p} y_{a1} + \sum_{k=1}^{r-1} (f+k)\sum_{a\in A_p}
(y_{a,k+1}-y_{ak}) + F \left(T-\sum_{a\in A_p}y_{ar}\right) \\
=TF+\sum_{a\in A_p}\sum_{k=1}^ry_{ak}\left[(f+k-1)-(f+k)\right]
= TF -
\sum_{a\in A_p} \sum_{k=1}^r y_{ak}.
\end{multline*}
Hence the incremental maximum flow problem can also be formulated as
follows
\begin{equation*}
\min\ \sum_{a\in A_p}\sum_{k=1}^ry_{ak}
\end{equation*}
subject to
\begin{align*}
&& \sum_{a\in\delta^{\text{out}}(v)}x_{ak}-\sum_{a\in\delta^{\text{in}}(v)} x_{ak} &=
\begin{cases}
0 & \text{for }v\not\in\{s,t\} \\
f+k & \text{for }v=s\\
-f-k & \text{for }v=t
\end{cases} && \text{for }v\in N,\ k\in\{1,\ldots,r\}, \\
&& x_{ak} &\leqslant u_a && \text{for }a\in A_e,\ k\in\{1,\ldots,r\},\\
&& x_{ak} &\leqslant u_ay_{ak} && \text{for }a\in A_p,\ k\in\{1,\ldots,r\},\\
&& y_{ak} &\leqslant y_{a,k+1} && \text{for }a\in A_p,\ k\in\{1,\ldots,r-1\},\\
&& x_{ak} &\geqslant 0 && \text{for }a\in A,\ k\in\{1,\ldots,r\},\\
&& y_{ak} &\in\{0,1\} && \text{for }a\in A_p,\ k\in\{1,\ldots,r\}.
\end{align*}
We denote this formulation by IMFP$^2$.

Observe that the size of IMFP$^1$ strongly depends on the length of
the planning horizon, whereas the size of IMFP$^2$ strongly depends
on the difference between the initial and ultimate maximum flow
values.

\section{Heuristics}\label{sec:heuristics}

In this section, we introduce two natural strategies for trying to
obtain high quality solutions: (1) getting a flow increment as
quickly as possible, and (2) reaching a maximum possible flow as
quickly as possible, as well as a hybrid strategy.

\subsection{Quickest flow increment}\label{subsec:quickest_inc}

A natural greedy strategy is to build the arcs such that a flow
increment is always reached as quickly as possible. Suppose we have
already built the arcs in $B \subseteq A_p$ to reach a maximum flow
value $f+k$. A smallest set of potential arcs to be built, in
addition to $B$, to reach a flow of value at least $f+k+1$ can be
determined by solving a fixed charge network flow problem:
find a flow of value $f+k+1$ where arcs in $A_e \cup B$ have zero
cost, and arcs in $A_p \setminus B$ incur a cost of 1 if they carry
a nonzero flow. More formally, in order to determine the smallest
number of potential arcs that have to be built to increase the flow
from $f+k$ to at least $f+k+1$, we solve the problem
\textsc{MinArcs}($B,k+1$):
\begin{align*}
&&\min\ z=\sum_{a\in A_p\setminus B}y_{a} \\
&\text{subject to} & \sum_{a\in\delta^{\text{out}}(v)}x_{a}-\sum_{a\in\delta^{\text{in}}(v)} x_{a} &=
\begin{cases}
0 & \text{for }v\not\in\{s,t\} \\
f+k+1 & \text{for }v=s\\
-f-k-1 & \text{for }v=t
\end{cases} && \text{for }v\in N, \\
&& x_{a} &\leqslant u_a && \text{for }a\in A_e\cup B,\\
&& x_{a} &\leqslant u_ay_{a} && \text{for }a\in A_p\setminus B,\\
&& x_{a} &\geqslant 0 && \text{for }a\in A,\\
&& y_{a} &\in\{0,1\} && \text{for }a\in A_p\setminus B.
\end{align*}
Next, among all the possibilities that increase the flow by building
the smallest possible number of potential arcs, we want to choose
one that maximizes the flow increase. Given the optimal value $z^*$
for the problem \textsc{MinArcs}($B,k+1$), this can be achieved by
solving another MIP, denoted by \textsc{MaxVal}($B,z^*$):
\begin{align*}
&&\max\ \xi\phantom{\sum_{a\in\delta^{\text{out}}(v)}x_{a}-\sum_{a\in\delta^{\text{in}}(v)} x_{a}} &  \\
&\text{subject to} & \sum_{a\in\delta^{\text{out}}(v)}x_{a}-\sum_{a\in\delta^{\text{in}}(v)} x_{a} &=
\begin{cases}
0 & \text{for }v\not\in\{s,t\} \\
\xi & \text{for }v=s\\
-\xi & \text{for }v=t
\end{cases} && \text{for }v\in N, \\
&& x_{a} &\leqslant u_a && \text{for }a\in A_e\cup B,,\\
&& x_{a} &\leqslant u_ay_{a} && \text{for }a\in A_p\setminus B,\\
&& \sum_{a\in A_p\setminus B} y_a &= z^*\\
&& x_{a} &\geqslant 0 && \text{for }a\in A,\\
&& y_{a} &\in\{0,1\} && \text{for }a\in A_p\setminus B.
\end{align*}

The greedy heuristic is described formally in
Algorithm~\ref{alg:greedy} (assuming the entire set $A_p$ is
provided as input).
\begin{algorithm}[htb]
\SetKwInOut{Input}{input}\SetKwInOut{Output}{output}
\Input{$\overline{A}_p \subset A_p$}
$D \leftarrow (V,A)$ with $A = A_e \cup \overline{A}_p$ \;
$B \leftarrow \emptyset$ \{ \emph{initialize the set of built arcs} \} \;
$k \leftarrow 0$ \{ \emph{index of initial flow value} \} \;
\While{$k < F-f$}{
    $z^* \leftarrow$ optimal value of \textsc{MinArcs}($B,k+1$) \;
    $(x,y)\leftarrow$ optimal solution for \textsc{MaxVal}($B,z^*$) \;
    \{ \emph{add potential arcs to be built to reach a flow of at least $f+k+1$} \} \;
    $B \leftarrow B \cup \{a \in \overline{A}_p \ : \ y_a = 1 \}$ \;
    $k \leftarrow$ (maximum flow value using only arcs in $A_e \cup B$) $-f$ \;
}
\caption{Quickest-increment\label{alg:greedy}}
\end{algorithm}
In general, this heuristic still requires the solution of MIPs
\textsc{MinArcs}($B,k+1$) and \textsc{MaxVal}($B,z^*$), hence we do
not get a polynomial bound on the runtime. But these MIPs are much
smaller than IMFP$^1$ and IMFP$^2$, so we might expect that they can
be solved more efficiently.

However, if the capacities of all potential arcs are equal to $1$,
then the problems \textsc{MinArcs}($B,k+1$) reduce to minimum cost
flow problems, and their solutions are already optimal for
\textsc{MaxVal}($B,z^*$), because the optimal flow increment is 1.
Thus for these instances, the quickest increment heuristic runs in
polynomial time.

Furthermore, if the maximization of the flow increment is omitted,
then the heuristic can be implemented to run in polynomial time even
for general capacities. For this purpose, suppose we have already
built the arcs in $B\subset A_p$ and we have a maximum flow $\vect
x$ of value $f+k$ for the network $(V,A_e\cup B)$. Using the
residual network with respect to $\vect x$, we can apply a labeling
procedure which computes for each node $v$ a triple
$(d(v),\delta(v),p(v))$, where
\begin{itemize}
\item $d(v)$ is the minimum number of arcs in $A_p\setminus B$ on any augmenting $s$-$v$-path,
\item $\delta(v)$ is the maximum augmentation for any $s$-$v$-path using at most $d(v)$ arcs from
  $A_p\setminus B$, and
\item $p(v)$ is the predecessor of $v$ on an $s$-$v$-path which contains $d(v)$ arcs from
  $A_p\setminus B$ and can be augmented by $\delta(v)$ units of flow.
\end{itemize}
This can be done in time $O(\lvert A\rvert)$, for instance using BFS
starting from $s$, and the distance label $d(t)$ equals the minimum
size of a set $B'\subseteq A_p\setminus B$ with the property that
the maximum $s$-$t$-flow value in the network $(V,A_e\cup B\cup B')$
is strictly larger than $f+k$. Furthermore, $B'=P\cap(A_p\setminus
B)$ is such a set of size $d(t)$, where $P$ is the augmenting
$s$-$t$-path corresponding to the result of the labeling procedure.
Suppose there is a set $B''\subseteq A_p\setminus B$ with $\lvert
B''\rvert<d(t)$ such that the maximum flow in $(V,A_e\cup B\cup
B'')$ is strictly larger than $f+k$. Then the residual network for
$(V,A_e\cup B\cup B'')$ with respect to the flow $\vect x$ of value
$F_B$ must contain an augmenting path $P'$. Now $P'\cap(A_P\setminus
B)\subseteq B''$, thus $P'$ is an augmenting $s$-$t$-path which
contains less than $d(t)$ arcs from $A_p\setminus B$, which
contradicts the construction of $d(t)$.  Using this labeling
procedure, a quickest-increment heuristic can be described as
follows.
\begin{itemize}
\item Initialize the set of built arcs $B=\emptyset$ and a zero flow $\vect x=\vect 0$
\item while $\vect x$ is not a maximum flow for $(V,A)$ do
  \begin{itemize}
  \item Find an augmenting path $P$ using the smallest number of new potential arcs.
  \item Add the potential arcs on $P$ to $B$.
  \item Update the flow $\vect x$.
  \end{itemize}
\end{itemize}
This algorithm runs in time $O(\lvert V\rvert\lvert A\rvert^3)$. The
$\delta$-component in our node labels ensures that we find the
augmenting path that (1) requires the smallest number of new arcs to
be built, and (2) allows the maximum augmentation among these paths.
Nevertheless, it might be beneficial to build another path which
allows more augmentations afterwards without building additional
arcs. This is illustrated in Figure~\ref{fig:example}.
\begin{figure}[htb]
  \centering
\includegraphics[width=.4\linewidth]{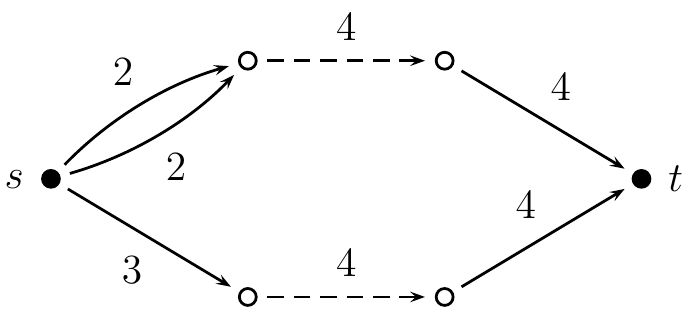}
  \caption{An example where the polynomial variant of \textit{Quickest-increment} builds the
    ``wrong'' arc. Every augmenting path requires one
    potential arc. Our algorithm chooses the lower arc because it allows the augmentation
    of 3 units of flow. But it would be better to build the upper arc first, and
    Algorithm~\ref{alg:greedy} does the right thing.}\label{fig:example}
\end{figure}

Since Algorithm~\ref{alg:greedy}, which solves subproblems
\textsc{MinArcs}$(B,k+1)$ and \textsc{MaxVal}$(B,z^*)$, when
implemented using state-of-the-art commercial MIP solver, turned out
to be sufficiently efficient for our test instances, we decided not
to implement the polynomial variant for our computational study,
which is presented in Section~\ref{sec:computations}.

\subsection{Quickest ultimate flow}\label{subsec:quickest_ultimate}

A potential drawback of the quickest increment heuristic is that it
might build arcs which yield (small) flow increments but that are
not used in an ultimate maximum flow. To force the heuristic to
build arcs that are used in an ultimate maximum flow, we can first
determine a smallest set of potential arcs that admit an ultimate
maximum flow by finding an optimal solution $(x^*,y^*)$ to the
problem \textsc{MinArcs}($\emptyset,r$), and then restricting the
arc choices in Algorithm~\ref{alg:greedy} to the arcs used in the
optimal solution $(x^*,y^*)$, i.e., by fixing $y_a = 0$ for all $a
\in A_p$ with $y^*_a = 0$. This algorithm is described formally in
Algorithm~\ref{alg:quickest_ultimate}.
\begin{algorithm}[htb]
\SetKwInOut{Input}{input}\SetKwInOut{Output}{output}
\Input{$r$}
$(x^*,y^*)\leftarrow$ optimal solution for
\textsc{MinArcs}$(\emptyset,r)$ \; $\overline{A}_p \leftarrow \{ a
: y^*_a = 1 \} $ \; $(\tilde x,\tilde y) \leftarrow$ solution
obtained by \textit{Quickest-increment} with input $\overline{A}_p$
\; \caption{Quickest-to-ultimate\label{alg:quickest_ultimate}}
\end{algorithm}
Because the smallest set of potential arcs that admits an ultimate
maximum flow is unlikely to be unique, it is possible that the
initial choice of arcs may not be the best.

\subsection{Quickest target flow}

Combining the ideas of quickest flow increment and quickest ultimate
flow, we can design a general framework. Let $0 = r_0 < r_1 < r_2 <
\cdots < r_k = r$ be a sequence of target values. Put $B_0 =
\emptyset$, and for $i \geqslant 1$ suppose that we have already
determined a set $B_{i-1} \subseteq A_p$ such that the maximum flow
value for $A_e\cup B_{i-1}$ is $f + r_{i-1}$. Following the quickest
ultimate flow strategy, we can determine a smallest set of potential
arcs $B_i$ with $B_{i-1} \subseteq B_i \subseteq A_p$ such that $A_e
\cup B_i$ allows a flow of value $f + r_{i}$, and then we can use
the quickest flow increment strategy to specify the order in which
the arcs in $B_i\setminus B_{i-1}$ are built. This is described in
detail in Algorithm~\ref{alg:quickest_target}.
\begin{algorithm}[htb]
\SetKwInOut{Input}{input}\SetKwInOut{Output}{output}
\Input{$0 = r_0 < r_1 < r_2 < \cdots < r_k = r$}
$B_0 \leftarrow \emptyset$ \;
$\overline{A}_p \leftarrow \emptyset$ \;
\For{$i = 1, \ldots, p$}{
    $z^* \leftarrow$ optimal objective value for \textsc{MinArcs}$(B_{i-1},r_i)$ \;
    $(x^*,y^*)\leftarrow$ optimal solution for \textsc{MaxVal}$(B_{i-1},z^*)$ \;
    $B_i \leftarrow \{ a : y^*_a = 1 \} $ \;
    $\overline{A}_p \leftarrow \overline{A}_p \cup B_i $ \;
    $(\tilde x,\tilde y) \leftarrow$ solution obtained by \textit{Quickest-increment} with input $\overline{A}_p$ \;
} \caption{Quickest-to-target\label{alg:quickest_target}}
\end{algorithm}
Note that this framework implicitly also provides a risk mitigation
strategy, because we are not locked in to a set of arcs for the
entire planning period, as in \textit{Quickest-to-ultimate}.

\section{The unit capacity case}\label{sec:unit_capacity}

As mentioned in the introduction, the incremental network design
problem with maximum flows is strongly NP-hard even when restricted
to instances where every existing arc has capacity 1 and every
potential arc has capacity 3. That leaves open the possibility that
when all arcs have capacity 1, i.e., both existing and potential
arcs, the problem is polynomially solvable. Unfortunately, we have
been unable to settle the complexity status of the unit-capacity
case, but we have been able to derive a number of results regarding
the performance of \textit{Quickest-increment} and
\textit{Quickest-to-ultimate} in the unit-capacity case, which are
presented in this section.

Figures~\ref{fig:example_1} and~\ref{fig:example_2} illustrate that
the heuristics can be off by a factor of close to 2
(\textit{Quickest-to-ultimate}) and 3/2
(\textit{Quickest-increment}).
\begin{figure}[htb]
  \begin{minipage}[b]{.48\linewidth}
  \centering
\includegraphics[width=\textwidth]{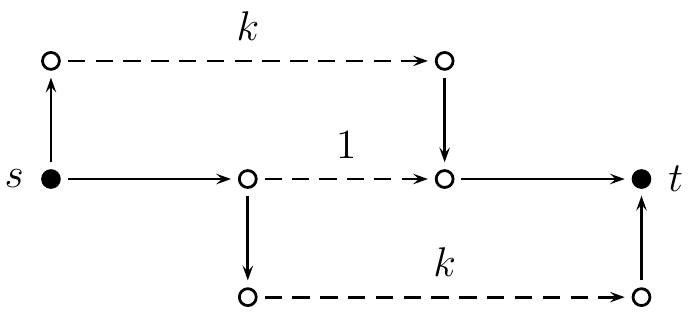}
\caption{Bad instance for \textit{Quickest-to-ultimate}.}\label{fig:example_1}
\end{minipage}\hfill
  \begin{minipage}[b]{.48\linewidth}
  \centering
\includegraphics[width=\textwidth]{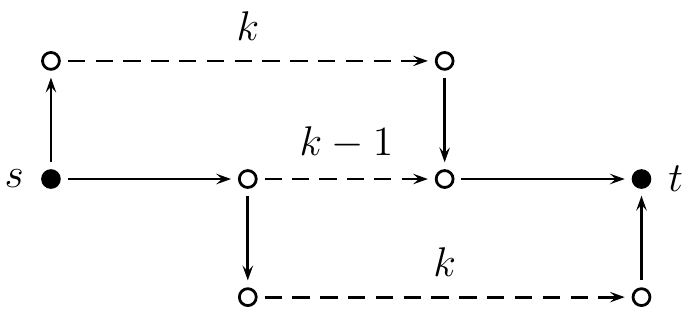}
  \caption{Bad instance for \textit{Quickest-increment}.}\label{fig:example_2}
  \end{minipage}
\end{figure}
The dashed arcs represent paths of potential arcs whose lengths are
indicated by the arc labels. For the instance in
Figure~\ref{fig:example_1}, the time horizon is $T=2k+2$ and the
optimal solution (which is found by \textit{Quickest-increment}) has
value $2k+2$, while \textit{Quickest-to-ultimate} provides a
solution of value $k\cdot 0+k\cdot 1+2\cdot 2= k+4$. For the
instance in Figure~\ref{fig:example_2}, the time horizon is $T=3k$
and the optimal solution (which is found by
\textit{Quickest-to-ultimate}) has value $3k$, while
\textit{Quickest-increment} provides a solution of value $(k-1)\cdot
0+2k\cdot 1+ 2 = 2k+2$. Combining the two examples we get an
instance that fools both heuristics (Figure \ref{fig:example_3}).
For this instance, \textit{Quickest-to-ultimate} achieves a total
flow of $10k+4$, \textit{Quickest-increment} gets $11k+3$, but a
combination of the two strategies yields a total flow of $13k$.
\begin{figure}[htb]
  \centering
\includegraphics[width=.7\textwidth]{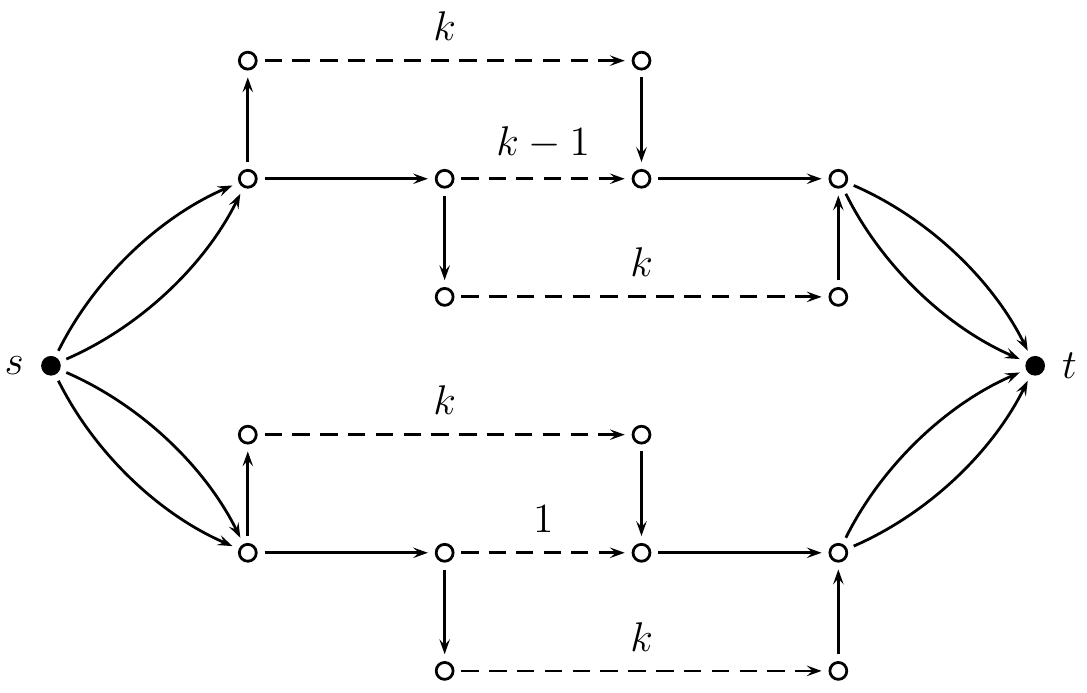}
\caption{An instance where both \textit{Quickest-to-ultimate} and
\textit{Quickest-increment} fail.}\label{fig:example_3}
\end{figure}

\subsection{Upper bounds}\label{subsec:upper_bounds}

In this subsection, we show that the instances in
Figures~\ref{fig:example_1} and~\ref{fig:example_2} represent the
worst situations for the heuristics \textit{Quickest-to-ultimate}
and \textit{Quickest-increment}, respectively. Let $z^*$, $z_1$ and
$z_2$ denote the optimal objective value, and the values obtained by
\textit{Quickest-to-ultimate} and \textit{Quickest-increment},
respectively. For $i=0,\ldots,r-1$, let $\lambda_i$ and $\mu_i$
denote the number of time periods with a maximum flow value $f+i$ in
the solution from \textit{Quickest-to-ultimate} and
\textit{Quickest-increment}, respectively. In other words,
$\lambda_i$ and $\mu_i$ are the optimal values of the problems
\textsc{MinArcs}($B,i+1$) which are solved for increasing the flow
from $f+i$ to $f+i+1$. Furthermore, let $c_j$ for $0\leqslant
j\leqslant r$ be the minimum number of potential arcs that have to
be built in order to reach a flow of $f+j$:
\begin{align}
\nonumber c_j &= \min\{ \sum_{a\in A_{p}}x_a\quad :\quad \sum_{a\in\delta^{\text{out}}(v)}x_a-\sum_{a\in\delta^{\text{in}}v)}x_a=
  \begin{cases}
    0 & \text{for }v\in V\setminus\{s,t\}, \\
    f+j & \text{for }v=s,\\
    -f-j & \text{for }v=t,
  \end{cases}\\
&\qquad\qquad\qquad\qquad 0\leqslant x_a\leqslant 1\text{ for all }a\in A\} \label{eq:c_primal}\\
\nonumber &= \max\{ (f+j)(\pi_s-\pi_t)+\sum_{a\in A}y_a\quad :\quad \pi_v-\pi_w+y_a\leqslant
\begin{cases}
  1 & \text{for }a=(v,w)\in A_p, \\
  0 & \text{for }a=(v,w)\in A_e,
\end{cases} \\
&\qquad\qquad y_a\leqslant 0\text{ for all }a\in A\}. \label{eq:c_dual}
\end{align}
Observe that if the associated sets of arcs would be nested, then
they would give rise to an optimal solution. However, since this is
unlikely to be the case in general, the values $c_j$ can only be
used to derive an upper bound.
\begin{lemma}\label{lem:bound_opt_by_c}
  $\displaystyle z^*\leqslant TF-\sum_{j=1}^rc_j\leqslant TF-\frac12r(r-1)-c_r$.
\end{lemma}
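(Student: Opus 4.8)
The plan is to establish the two inequalities in turn, the first relating $z^*$ to the quantities $c_j$, and the second being a purely arithmetic consequence of the unit-capacity assumption.

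For the first inequality I would work with the objective of IMFP$^2$ derived above, namely that in any feasible solution the total flow equals $TF-\sum_{a\in A_p}\sum_{k=1}^r y_{ak}$, where $y_{ak}=1$ exactly when arc $a$ is built while the maximum flow value is still below $f+k$. Maximizing the total flow is therefore the same as minimizing $\sum_{a\in A_p}\sum_{k=1}^r y_{ak}$, so it suffices to show that $\sum_{a\in A_p}y_{ak}\ge c_k$ for each $k\in\{1,\ldots,r\}$ in every feasible solution. To see this, consider the first period in which the flow reaches the value $f+k$ (such a period exists because, with all capacities equal to $1$, building one arc raises the maximum flow by at most one, so the flow value runs through every integer between $f$ and $F$). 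At that period the usable potential arcs are precisely those with $y_{ak}=1$, and together with $A_e$ they support a flow of value $f+k$; by the definition \eqref{eq:c_primal} of $c_k$ there must be at least $c_k$ of them. Summing the bound $\sum_{a\in A_p}y_{ak}\ge c_k$ over $k$ and applying it to an optimal solution gives $z^*\le TF-\sum_{j=1}^r c_j$.

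For the second inequality it is enough to prove $\sum_{j=1}^{r-1}c_j\ge\tfrac12 r(r-1)$, since then writing $\sum_{j=1}^r c_j=\sum_{j=1}^{r-1}c_j+c_r$ and subtracting from $TF$ yields the claim. Here I would again use unit capacities: adding a single arc of capacity $1$ raises the capacity of every $s$-$t$ cut by at most one, hence raises the minimum cut value, equivalently the maximum flow, by at most one; so to lift the flow from its initial value $f$ to $f+j$ one needs at least $j$ potential arcs, i.e. $c_j\ge j$. Summing gives $\sum_{j=1}^{r-1}c_j\ge\sum_{j=1}^{r-1}j=\tfrac12 r(r-1)$, as required.

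I expect the only delicate point to be the justification of $\sum_{a\in A_p}y_{ak}\ge c_k$: one must argue that the potential arcs present at the moment the flow first attains $f+k$ are exactly the arcs counted by $y_{\cdot k}$ and that they genuinely realize a flow of value $f+k$, so that the lower bound $c_k$ from \eqref{eq:c_primal} applies. This in turn relies on integrality of the unit-capacity max-flow polytope, which guarantees that $c_k$, defined through the relaxed constraints $0\le x_a\le1$, really counts a minimum number of potential arcs rather than a fractional optimum. The remaining steps are routine.
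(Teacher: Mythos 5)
Your proof is correct and takes essentially the same route as the paper: the paper counts the periods $\lambda^*_i$ spent at each flow value $f+i$ and observes that any feasible solution has at least $c_{i+1}$ periods with flow at most $f+i$, which is precisely your inequality $\sum_{a\in A_p}y_{ak}\geqslant c_k$ expressed in period counts rather than in the IMFP$^2$ variables, after which both arguments finish with the same algebraic rearrangement and the same bound $c_j\geqslant j$ for the second inequality. (Your closing worry about LP integrality is unnecessary: the direction you need, $c_k\leqslant\sum_{a\in A_p}y_{ak}$, holds because any flow of value $f+k$ supported on the usable network, fractional or not, satisfies $\sum_{a\in A_p}x_a\leqslant\lvert\{a\,:\,y_{ak}=1\}\rvert$ since $x_a\leqslant 1$ and $x_a=0$ on unbuilt potential arcs.)
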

\begin{proof}
For $i=0,\ldots,r-1$, let $\lambda^*_i$ be the number of time
periods with flow $f+i$ in an optimal solution. Since any feasible solution must have at least
$c_{i+1}$ periods with flow value at most $f+i$, we have
\[\lambda^*_0+\cdots+\lambda^*_i\geqslant c_{i+1}.\]
Summing over $i$, we obtain
\[\sum_{i=0}^{r-1}\lambda^*_i(r-i)\geqslant\sum_{j=1}^rc_j,\]
hence
\begin{multline*}
z^*=\sum_{i=0}^{r-1}\lambda^*_i(f+i)+\left(T-\sum_{i=0}^{r-1}\lambda^*_i\right)(f+r)=TF-\sum_{i=0}^{r-1}\lambda^*_i\left[(f+r)-(f+i)\right]\\
=TF-\sum_{i=0}^{r-1}\lambda^*_i(r-i)\leqslant TF-\sum_{j=1}^rc_j.
\end{multline*}
This proves the first inequality and the second one follows with
$c_j\geqslant j$ for $1\leqslant j\leqslant r$.
\end{proof}
The values $z_1$ and $z_2$ can be expressed in terms of the
sequences $(\lambda_i)$ and $(\mu_i)$, respectively.
\begin{lemma}\label{lem:z_i_by_lambda_resp_mu}
$\displaystyle z_1=TF-\sum_{i=0}^{r-1}\lambda_i(r-i)$ and $\displaystyle z_2=TF-\sum_{i=0}^{r-1}\mu_i(r-i)$.
\end{lemma}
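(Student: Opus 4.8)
The plan is to mirror the computation already carried out in Lemma~\ref{lem:bound_opt_by_c}, but now applied to the specific solutions produced by the two heuristics rather than to an optimal solution. The key observation is that $\lambda_i$ and $\mu_i$ were defined precisely as the number of time periods with flow value $f+i$ in the \textit{Quickest-to-ultimate} and \textit{Quickest-increment} solutions, respectively. So the structure of each heuristic solution is exactly the same as that of any feasible solution: it spends $\lambda_i$ (resp.\ $\mu_i$) periods at flow value $f+i$ for $i=0,\ldots,r-1$, and then spends the remaining $T-\sum_{i=0}^{r-1}\lambda_i$ (resp.\ $T-\sum_{i=0}^{r-1}\mu_i$) periods at the ultimate flow value $f+r=F$.

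First I would write down the total flow for the \textit{Quickest-to-ultimate} solution directly as a weighted sum over the flow levels,
\[
z_1=\sum_{i=0}^{r-1}\lambda_i(f+i)+\Bigl(T-\sum_{i=0}^{r-1}\lambda_i\Bigr)(f+r).
\]
This is just the statement that total flow equals (number of periods at each level) times (that level), summed over all levels, with the top level $F=f+r$ accounting for all remaining periods. Then I would perform the same algebraic regrouping as in the proof of Lemma~\ref{lem:bound_opt_by_c}: factor out the $TF$ term by writing each $f+i$ as $(f+r)-(r-i)$, so that
\[
z_1=TF-\sum_{i=0}^{r-1}\lambda_i\bigl[(f+r)-(f+i)\bigr]=TF-\sum_{i=0}^{r-1}\lambda_i(r-i).
\]
The identical calculation with $\mu_i$ in place of $\lambda_i$ yields the formula for $z_2$.

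The only thing that needs justification beyond pure algebra is that $\sum_{i=0}^{r-1}\lambda_i\le T$ and likewise for $\mu_i$, so that the number of periods at the ultimate flow is nonnegative and the accounting is valid; this is guaranteed because the planning horizon satisfies $T>\lvert A_p\rvert$, which ensures each heuristic reaches the ultimate flow value within the horizon. I do not expect any genuine obstacle here: the lemma is essentially a bookkeeping identity, and the main point is simply to recognize that both heuristic solutions have the same level-by-level structure that was used in the optimal case, so the same manipulation applies verbatim. The only care required is to confirm that $\lambda_i$ and $\mu_i$ are well-defined for the full range $i=0,\ldots,r-1$ (i.e.\ that each intermediate flow value is actually attained), which follows from the fact that both heuristics increase the flow by building arcs until the ultimate value $F$ is reached, passing through every integer value $f,f+1,\ldots,F$ along the way since each flow increment is by exactly one unit in the unit-capacity setting.
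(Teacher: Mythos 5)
Your proof is correct and follows essentially the same route as the paper, which simply writes $z_1=\sum_{i=0}^{r-1}\lambda_i(f+i)+\bigl(T-\sum_{i=0}^{r-1}\lambda_i\bigr)(f+r)=TF-\sum_{i=0}^{r-1}\lambda_i(r-i)$ and notes the identical computation for $z_2$. Your extra worry about each intermediate flow value being attained is unnecessary (the identity holds even if some $\lambda_i=0$, since $\lambda_i$ merely counts periods at level $f+i$), but it does no harm.
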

\begin{proof}
We have
\[z_1=\sum_{i=0}^{r-1}\lambda_i(f+i)+\left(T-\sum_{i=0}^{r-1}\lambda_i\right)(f+r)=TF-\sum_{i=0}^{r-1}\lambda_i(r-i),\]
and similarly for $z_2$.
\end{proof}
The sequences $(\lambda_i)_{i=0,\ldots,r-1}$ and
$(\mu_i)_{i=0,\ldots,r-1}$ are non-decreasing.
\begin{lemma}\label{lem:lambda_incr}
We have
$\lambda_0\leqslant\lambda_1\leqslant\cdots\leqslant\lambda_{r-1}$
and $\mu_0\leqslant\mu_1\leqslant\cdots\leqslant\mu_{r-1}$.
\end{lemma}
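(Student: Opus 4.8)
The plan is to prove the statement for the sequence $(\mu_i)$ and then obtain the claim for $(\lambda_i)$ for free. Indeed, \textit{Quickest-to-ultimate} is just \textit{Quickest-increment} run on the restricted instance whose potential arc set is $\overline{A}_p$, so the numbers $\lambda_i$ are precisely the $\mu$-values of that sub-instance; any argument establishing $\mu_0\leqslant\cdots\leqslant\mu_{r-1}$ for an arbitrary potential-arc set applies verbatim with $A_p$ replaced by $\overline{A}_p$. So I would fix $i\in\{0,\ldots,r-2\}$ and look at the run of \textit{Quickest-increment}. Let $B$ be the set of arcs already built when the flow value is $f+i$, let $x$ be the corresponding flow of value $f+i$ on $A_e\cup B$, and let $N$ and $N'$ be the sets of arcs built to pass from $f+i$ to $f+i+1$ and from $f+i+1$ to $f+i+2$, so that $\lvert N\rvert=\mu_i$, $\lvert N'\rvert=\mu_{i+1}$, and (using that in the unit-capacity case every increment equals exactly $1$) a flow of value $f+i+2$ exists using only arcs in $A_e\cup B\cup N\cup N'$. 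Since every arc of $N\cup N'$ is still unbuilt at flow value $f+i$, we have $x_a=0$ for all $a\in N\cup N'$.

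The core of the argument is a flow-difference decomposition. Let $w$ be a flow of value $f+i+2$ on $A_e\cup B\cup N\cup N'$ (which exists because $i+2\leqslant r$), and set $g=w-x$. Because all capacities equal $1$, we have $g_a\in\{-1,0,1\}$ on every arc, and $g$ is a feasible flow of value $2$ in the residual network of $x$. I would decompose $g$ into two arc-disjoint augmenting $s$-$t$ paths $Q_1,Q_2$ (plus circulations), which is possible exactly because all residual capacities are at most $1$. Augmenting $x$ along a single path $Q_j$ yields a feasible flow of value $f+i+1$ whose only potential arcs outside $A_e\cup B$ are the forward arcs of $Q_j$ lying in $N\cup N'$ (there are no backward uses of these arcs, as $x$ vanishes on them); building exactly those arcs is therefore a feasible solution of \textsc{MinArcs}$(B,i+1)$, so $Q_j$ must contain at least $\mu_i$ arcs of $N\cup N'$. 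As $Q_1$ and $Q_2$ are arc-disjoint, their arc sets inside $N\cup N'$ are disjoint, whence $2\mu_i\leqslant\lvert N\cup N'\rvert\leqslant\lvert N\rvert+\lvert N'\rvert=\mu_i+\mu_{i+1}$, that is, $\mu_i\leqslant\mu_{i+1}$.

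The hard part will be the two residual-network claims underlying the counting inequality. First, one must justify that $w-x$ genuinely splits into \emph{two arc-disjoint} augmenting paths: this is where unit capacity is essential, since it forces $\lvert g_a\rvert\leqslant1$ and hence a decomposition into arc-disjoint paths each carrying one unit. Second, one must verify that each individual $Q_j$, taken on its own, is a valid way of incrementing the flow by one from $f+i$, so that it necessarily uses at least $\mu_i$ new arcs; this in turn relies on checking that $Q_j$ meets the arcs of $N\cup N'$ only in the forward direction (guaranteed by $x_a=0$ there), so that the number of arcs it must build equals the number of its arcs in $N\cup N'$. I expect the bookkeeping in these two points to be the only real work; once they are in place the desired inequality $2\mu_i\leqslant\mu_i+\mu_{i+1}$ is immediate, and the monotonicity of both $(\mu_i)$ and $(\lambda_i)$ follows.
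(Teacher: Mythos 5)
Your proof is correct, but it takes a genuinely different route from the paper. The paper argues via linear programming duality: it writes $\lambda_i$ as the optimum of a min-cost flow LP, passes to the dual, shows from feasibility of a flow of value $f+i$ on $A_e\cup B_i$ that an optimal dual $(\pi^*,y^*)$ satisfies $\lambda_i\leqslant \pi^*_s-\pi^*_t$, and then shifts $y^*_a$ down by $1$ on the $\lambda_i$ arcs of $B_{i+1}\setminus B_i$ to produce a feasible dual point for the step-$(i+1)$ problem, giving $\lambda_{i+1}\geqslant \pi^*_s-\pi^*_t\geqslant\lambda_i$ (and the same with $A$ in place of $\overline A$ for $\mu$). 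You instead argue combinatorially: decompose $w-x$ in the residual network into two arc-disjoint unit augmenting paths (valid because all residual capacities are $1$, and with no backward traversals of $N\cup N'$ since $x$ vanishes there), observe that each path on its own is a feasible solution of \textsc{MinArcs}$(B,i+1)$ and hence carries at least $\mu_i$ new arcs, and conclude $2\mu_i\leqslant\lvert N\rvert+\lvert N'\rvert=\mu_i+\mu_{i+1}$. All the delicate points you flag do go through: $\lvert N\rvert=\mu_i$ exactly (by the cardinality constraint in \textsc{MaxVal}), the two paths cannot share an arc of $N\cup N'$, and your reduction of the $\lambda$-statement to the $\mu$-statement on the sub-instance with arc set $\overline A_p$ is exactly the symmetry the paper also exploits (in the reverse direction). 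Your parenthetical appeal to ``every increment equals exactly $1$'' in the unit-capacity case is asserted in Section~\ref{subsec:quickest_inc} of the paper, and in fact follows from your own counting (a set of $z^*$ new arcs yielding increment $2$ would force two disjoint paths with $z^*$ new arcs each). What each approach buys: yours is elementary and self-contained, with a transparent combinatorial reason for monotonicity; the paper's duality setup is heavier but is not wasted, since the same primal/dual characterizations are reused verbatim in Lemma~\ref{lem:mu_bound_by_c} and in the construction of the relaxation $X(r)$ and its dual $Y(r,\gamma)$ leading to Theorem~\ref{thm:suff_cond}, so within the paper the LP route amortizes its cost. Note also that both arguments are genuinely tied to unit capacities (yours through $\lvert g_a\rvert\leqslant 1$ and disjointness on potential arcs, the paper's through the bounds $0\leqslant x_a\leqslant 1$ in the LP), so neither generalizes for free.
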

\begin{proof}
Let $\overline A=A_e\cup\overline A_p$ be the set of arcs that are
used by \textit{Quickest-to-ultimate}. Set $B_0=\emptyset$, and
for $i=1,\ldots,r-1$ let $B_i$ be the set of arcs which are built by
\textit{Quickest-to-ultimate} in the first
$\lambda_0+\cdots+\lambda_{i-1}$ time periods. Then
\begin{align*}
\lambda_i &= \min\{ \sum_{a\in\overline A_{p}\setminus B_i}x_a\quad :\quad \sum_{a\in\delta^{\text{out}}(v)}x_a-\sum_{a\in\delta^{\text{in}}(v)}x_a=
  \begin{cases}
    0 & \text{for }v\in V\setminus\{s,t\}, \\
    f+i+1 & \text{for }v=s,\\
    -f-i-1 & \text{for }v=t,
  \end{cases}\\
&\qquad\qquad\qquad\qquad 0\leqslant x_a\leqslant 1\text{ for all }a\in \overline A\} \\
&= \max\{ (f+i+1)(\pi_s-\pi_t)+\sum_{a\in\overline A}y_a\ :\ \pi_v-\pi_w+y_a\leqslant
\begin{cases}
  1 & \text{for }a=(v,w)\in \overline A_p\setminus B_i, \\
  0 & \text{for }a=(v,w)\in A_e\cup B_i,
\end{cases} \\
&\qquad\qquad y_a\leqslant 0\text{ for all }a\in \overline A\}.
\end{align*}
Fix some $i<r-1$, and let $(\pi^*,y^*)$ be an optimal solution for
the dual characterization of $\lambda_i$. Since $A_e\cup B_i$ can
carry an $s$-$t$-flow of value $f+i$, the optimal value for the optimization problem which
characterizes $\lambda_i$ becomes $0$ when the flow value $f+i+1$ is replaced by $f+i$. In the dual
problem this is corresponds to the coefficient of $(\pi_s-\pi_t)$ in the objective function, and
since $(\pi^*,y^*)$ is also feasible for this modified dual problem, we have
\[(f+i)(\pi^*_s-\pi^*_t)+\sum_{a\in\overline A}y_a\leqslant 0,\]
and therefore
\[\lambda_i=(f+i+1)(\pi^*_s-\pi^*_t)+\sum_{a\in\overline A}y^*_a\leqslant\pi^*_s-\pi^*_t.\]

On the other hand, we obtain a feasible solution $(\pi',y')$ for the
dual characterization of $\lambda_{i+1}$ by setting $\pi'=\pi^*$ and
\[y'_a=
\begin{cases}
  y^*_a-1 & \text{for }a\in B_{i+1}\setminus B_i,\\
  y^*_a & \text{for }a\in \overline A\setminus(B_{i+1}\setminus B_i),
\end{cases}\]
and together with $\lvert B_{i+1}\setminus B_i\rvert=\lambda_i$ this
implies
\[\lambda_{i+1}\geqslant(f+i+2)(\pi^*_s-\pi^*_t)+\sum_{a\in\overline A}y^*_a-\lambda_i=\pi^*_s-\pi^*_t.\]
The inequality $\mu_i\leqslant\mu_{i+1}$ is proved in the same way (with $A$ instead of $\overline A$).
\end{proof}
\begin{theorem}\label{thm:approx_heur_1}
\textit{Quickest-to-ultimate} is a $2$-approximation algorithm for
the incremental maximum flow problem with unit capacities, i.e.,
$z^*\leqslant 2z_1$.
\end{theorem}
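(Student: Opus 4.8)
The plan is to sandwich the optimum and the heuristic value between matching linear expressions. Lemma~\ref{lem:bound_opt_by_c} already supplies a clean upper bound, $z^*\leqslant TF-\tfrac12 r(r-1)-c_r$, so the entire task reduces to establishing a complementary \emph{lower} bound on $z_1$, namely $2z_1\geqslant 2TF-c_r(r+1)$. Once both are in hand, the theorem will follow from the elementary inequality $rc_r-\tfrac12 r(r-1)\leqslant TF$, which I verify at the end.

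First I would rewrite $z_1$ using Lemma~\ref{lem:z_i_by_lambda_resp_mu}. Writing $\Lambda_i=\lambda_0+\cdots+\lambda_i$ for the partial sums of the level durations, a summation by parts converts $\sum_{i=0}^{r-1}\lambda_i(r-i)$ into $\sum_{i=0}^{r-1}\Lambda_i$, so that $z_1=TF-\sum_{i=0}^{r-1}\Lambda_i$. Two structural facts then feed the estimate: the full sum equals the size of the arc set that \textit{Quickest-to-ultimate} builds, which is a minimum set for the ultimate flow and hence has cardinality $\Lambda_{r-1}=\sum_{i=0}^{r-1}\lambda_i=c_r$; and, by Lemma~\ref{lem:lambda_incr}, the sequence $(\lambda_i)$ is non-decreasing. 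The heart of the argument is to bound $\sum_{i=0}^{r-1}\Lambda_i$ from above using this monotonicity: since $(\lambda_i)$ is non-decreasing, the average of its smallest $j$ terms is at most the overall average, i.e. $\Lambda_{j-1}\leqslant\tfrac{j}{r}\,c_r$ for each $j$. Summing over $j=1,\ldots,r$ yields $\sum_{i=0}^{r-1}\Lambda_i\leqslant\tfrac12 c_r(r+1)$, hence $2z_1\geqslant 2TF-c_r(r+1)$. I expect this convexity-type step to be the main obstacle, because it is precisely where Lemma~\ref{lem:lambda_incr} is indispensable: without monotonicity the partial sums could be front-loaded and the bound would break (this also explains why the instance of Figure~\ref{fig:example_1}, where $\lambda_0=\lambda_1$, is exactly tight for the bound).

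Finally I would combine the two estimates. It remains to check that $TF-\tfrac12 r(r-1)-c_r\leqslant 2TF-c_r(r+1)$, which rearranges to $rc_r-\tfrac12 r(r-1)\leqslant TF$. Since every potential arc can be built, we have $c_r\leqslant\lvert A_p\rvert<T$, and since $f\geqslant 0$ gives $F\geqslant r$, it follows that $rc_r\leqslant rT\leqslant TF$; discarding the non-positive term $-\tfrac12 r(r-1)$ then gives the required inequality with room to spare. This closes the chain $z^*\leqslant TF-\tfrac12 r(r-1)-c_r\leqslant 2z_1$, proving $z^*\leqslant 2z_1$.
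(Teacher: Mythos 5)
Your proposal is correct and takes essentially the same approach as the paper: it rests on the same ingredients (Lemmas~\ref{lem:bound_opt_by_c}, \ref{lem:z_i_by_lambda_resp_mu}, and~\ref{lem:lambda_incr}, the identity $\lambda_0+\cdots+\lambda_{r-1}=c_r$, and the time bound $rc_r\leqslant TF$), and your prefix-averaging step $\Lambda_{j-1}\leqslant\tfrac{j}{r}c_r$, summed over $j$, is algebraically equivalent to the paper's pairing inequality $\sum_{i=0}^{r-1}\lambda_i(r-1-2i)\leqslant 0$. Your intermediate bound $z_1\geqslant TF-\tfrac12 c_r(r+1)$ is in fact exactly the estimate the paper derives via Lemma~\ref{lem:averaging_is_best} in the appendix, so your argument is a repackaging of the paper's proof rather than a genuinely different route.
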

\begin{proof}
Using Lemma~\ref{lem:lambda_incr}, we have
\[\sum_{i=0}^{r-1}\lambda_i(r-1-2i)=\sum_{i=0}^{\lfloor(r-1)/2\rfloor}(\lambda_i-\lambda_{r-1-i})(r-1-2i)\leqslant 0.\]
Adding the inequality $(\lambda_0+\cdots+\lambda_{r-1})r\leqslant TF$, it follows that
\[TF-\sum_{i=0}^{r-1}\lambda_i(2(r-i)-1)\geqslant 0.\]
By construction, $c_r=\lambda_0+\cdots+\lambda_{r-1}$, and with
Lemma~\ref{lem:bound_opt_by_c} we derive
\begin{multline*}
  z^*\leqslant TF - \frac12r(r-1)-\sum_{i=0}^{r-1}\lambda_i\leqslant TF - \frac12r(r-1)-\sum_{i=0}^{r-1}\lambda_i+\left(TF-\sum_{i=0}^{r-1}\lambda_i(2(r-i)-1)\right)\\
=2\left(TF-\sum_{i=0}^{r-1}\lambda_i(r-i)\right)-\frac12r(r-1)=2z_1-\frac12r(r-1).\qedhere
\end{multline*}
\end{proof}

In order to prove that for \textit{Quickest-increment} the case in
Figure~\ref{fig:example_2} illustrates an upper bound for the
approximation ratio, we need some technical preparations. We start
by bounding the numbers $\mu_i$ by the numbers $c_j$.
\begin{lemma}\label{lem:mu_bound_by_c}
We have $\displaystyle\mu_i\leqslant c_j/(j-i)$ for all $(i,j)$ with
$0\leqslant i<j\leqslant r$.
\end{lemma}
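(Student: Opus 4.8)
The plan is to prove the bound directly by an averaging argument over an augmenting-path decomposition, exploiting the unit capacities. Fix a pair $(i,j)$ with $0\le i<j\le r$. By LP-integrality of the flow polytope in \eqref{eq:c_primal}, there is an integral optimal solution, so I may fix a set $C_j\subseteq A_p$ with $|C_j|=c_j$ such that the network $(V,A_e\cup C_j)$ admits an $s$-$t$ flow of value $f+j$. Let $B_i\subseteq A_p$ be the set of potential arcs built by \textit{Quickest-increment} when its maximum-flow value first reaches $f+i$, so that $(V,A_e\cup B_i)$ has maximum flow value $f+i$ and $\mu_i$ equals the optimal value of \textsc{MinArcs}$(B_i,i+1)$; if the value $f+i$ is never attained then $\mu_i=0$ and the inequality is trivial, so assume $f+i$ is attained.

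The key steps are as follows. Since $A_e\cup C_j$ already carries $f+j$ units, the enlarged network $(V,A_e\cup B_i\cup C_j)$ has maximum flow value at least $f+j$. Let $\vect x$ be a maximum flow of value $f+i$ in $(V,A_e\cup B_i)$, viewed as a (non-maximum) flow in the enlarged network. Then the residual network of $\vect x$ in $(V,A_e\cup B_i\cup C_j)$ admits a feasible flow of value at least $j-i$, which, because all capacities are $1$, decomposes into at least $j-i$ arc-disjoint augmenting $s$-$t$ paths $P_1,\dots,P_{j-i}$ (plus cycles, which I discard). The only arcs on these paths that must be \emph{built} are the forward residual arcs lying in $A_p\setminus B_i$; all such arcs belong to $C_j\setminus B_i$, and since the paths are arc-disjoint, the sets of new arcs they require are pairwise disjoint subsets of $C_j\setminus B_i$. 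Hence $\sum_{\ell=1}^{j-i}\lvert P_\ell\cap(A_p\setminus B_i)\rvert\le\lvert C_j\setminus B_i\rvert\le c_j$, and by averaging some $P_\ell$ satisfies $\lvert P_\ell\cap(A_p\setminus B_i)\rvert\le c_j/(j-i)$. Augmenting $\vect x$ along this single path raises the flow value to $f+i+1$ while requiring only the arcs in $P_\ell\cap(A_p\setminus B_i)$ to be built; this is a feasible solution of \textsc{MinArcs}$(B_i,i+1)$, so its optimal value $\mu_i$ is at most $c_j/(j-i)$.

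The step I expect to carry the real content is the arc-disjoint decomposition together with the bookkeeping that the new-arc requirements of the paths are disjoint and contained in $C_j\setminus B_i$; this is exactly where unit capacities are essential, since they force residual capacities into $\{0,1\}$, so that an integral residual flow of value $j-i$ splits into $j-i$ arc-disjoint augmenting paths and a single such path raises the flow by exactly one unit. Two small points to verify carefully are that forward residual arcs outside $A_p\setminus B_i$ (i.e.\ in $A_e\cup B_i$) and all backward residual arcs require no building, and that isolating one augmenting path is legitimate because each $P_\ell$ is augmenting with respect to $\vect x$ itself. Notably, monotonicity of $(\mu_i)$ from Lemma~\ref{lem:lambda_incr} is not needed for this argument.
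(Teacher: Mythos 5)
Your proof is correct, and it takes a genuinely different route from the paper. The paper argues via LP duality: it starts from the dual (potentials-and-reduced-costs) characterization of $\mu_i$, observes that the feasible region $X$ is contained in a relaxed region $Y$ obtained by replacing the individual constraints for the built arcs $B_i$ with the single aggregated inequality $(f+i)(\pi_s-\pi_t)+\sum_a y_a\leqslant 0$ (valid because $A_e\cup B_i$ carries a flow of value $f+i$), dualizes back to obtain the auxiliary problem~(\ref{eq:aux_problem}), and then exhibits a \emph{fractional} feasible solution by scaling an optimal flow for $c_j$ by $\gamma=1/(j-i)$ with $z=(j-i-1)/(j-i)$, yielding objective value $c_j/(j-i)$. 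You instead work integrally: you embed the flow of value $f+i$ into the network $(V,A_e\cup B_i\cup C_j)$, decompose an integral residual flow of value $j-i$ into arc-disjoint unit augmenting paths (legitimate precisely because unit capacities force residual capacities into $\{0,1\}$, and since exactly one residual copy of each arc has nonzero capacity, arc-disjointness in the residual network carries over to the underlying arcs), note that the new-arc requirements of the paths are pairwise disjoint subsets of $C_j\setminus B_i$, and average. Your bookkeeping is sound: backward residual arcs and forward arcs in $A_e\cup B_i$ need no building, a single path is augmenting with respect to $\vect x$ itself so no sequential-augmentation issue arises, and your edge case (values $f+i$ skipped by the heuristic, giving $\mu_i=0$) is handled correctly. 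Both arguments use unit capacities in an essential way --- the paper through the constraint $x_a+z\leqslant 1$, which needs $x_a\leqslant 1$, you through the $0/1$ residual capacities. What each buys: your argument is more elementary and constructive (it produces an explicit set of at most $\lfloor c_j/(j-i)\rfloor$ arcs certifying the bound), whereas the paper's dual formulation is not incidental --- it is the same machinery used in Lemma~\ref{lem:lambda_incr} and it feeds directly into the polyhedral framework $X(r)$, $Y(r,\gamma)$ behind Theorem~\ref{thm:suff_cond}, which is why the paper phrases the lemma that way. You are also right that monotonicity of $(\mu_i)$ is not needed; the paper's proof does not use it either.
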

\begin{proof}
Fix $i$ and $j$ with $0\leqslant i<j\leqslant r$, and let $X$ be the feasible region of the maximization problem which characterizes $\mu_i$, i.e.,
\begin{multline*}
X=\Big\{(y,\pi)\in\reals^{\lvert A\rvert}\times\reals^{\lvert V\rvert}\ :\ \pi_v-\pi_w+y_a\leqslant
  \begin{cases}
    0 & \text{for }a=(v,w)\in A_e\cup B_i,\\
    1 & \text{for }a=(v,w)\in A_p\setminus B_i,
  \end{cases} \\
y_a\leqslant 0\text{ for all }a\in A\Big\},
\end{multline*}
where $B_i$ is the set of arcs built by \textit{Quickest-increment}
in the first $\mu_0+\cdots+\mu_{i-1}$ time periods. Since the
maximum flow value on $A_e\cup B_i$ equals $f+i$, we have that
$X\subseteq Y$ where
\begin{multline*}
Y=\Big\{(y,\pi)\in\reals^{\lvert A\rvert}\times\reals^{\lvert V\rvert}\ :\ \pi_v-\pi_w+y_a\leqslant
  \begin{cases}
    0 & \text{for }a=(v,w)\in A_e,\\
    1 & \text{for }a=(v,w)\in A_p,
  \end{cases}\\
(f+i)(\pi_s-\pi_t)+\sum_{a\in A}y_a\leqslant 0,\ y_a\leqslant 0\text{
for all }a\in A\Big\},
\end{multline*}
Consequently,
\[\mu_i\leqslant \max\Big\{(f+i+1)(\pi_s-\pi_t)+\sum_{a\in A}y_a\ :\ (y,\pi)\in Y\Big\},\]
and by taking the dual we obtain
\begin{multline}\label{eq:aux_problem}
\mu_i\leqslant\min\Big\{ \sum_{a\in A_{p}}x_a\ :\ \sum_{a\in\delta^{\text{out}}(v)}x_a-\sum_{a\in\delta^{\text{in}}v)}x_a=
\begin{cases}
0 & \text{for }v\in V\setminus\{s,t\}, \\
f+i+1-(f+i)z & \text{for }v=s, \\
-f-i-1+(f+i)z & \text{for }v=t,
\end{cases}\\
x_a+z\leqslant 1\text{ for all }a\in A,\quad x_a \geqslant 0\text{
for all }a\in A,\ z\geqslant 0\Big\}.
\end{multline}
Let $x\in\reals^{\lvert A\rvert}$ be an optimal solution for the
problem~(\ref{eq:c_primal}) to determine $c_j$, i.e., an
$s$-$t$-flow of value $f+j$. Solving the system
\begin{align*}
(f+j)\gamma &= f+i+1-(f+i)z, & \gamma+z &= 1
\end{align*}
for $\gamma$ and $z$ yields $\gamma=1/(j-i)$ and $z=(j-i-1)/(j-i)$,
and this implies that a feasible solution $(x',z)$ for
problem~(\ref{eq:aux_problem}) can be obtained by setting
$x'_a=x_a/(j-i)$ for all $a\in A$ and $z=(j-i-1)/(j-i)$. Hence the
optimal value of problem~(\ref{eq:aux_problem}) is at most
$c_j/(j-i)$ and this concludes the proof.
\end{proof}
From Lemmas~\ref{lem:lambda_incr} and~\ref{lem:mu_bound_by_c} it
follows that $(\mu,c,T)\in\reals^r\times\reals^{r+1}\times\reals$
satisfies the following conditions:
\begin{align}
\mu_i-\mu_{i+1} &\leqslant 0 && \text{for }0\leqslant i\leqslant r-2, \label{eq:monotonicity} \\
(j-i)\mu_i-c_j &\leqslant 0 && \text{for }0\leqslant i<j\leqslant r, \label{eq:lambda_bound}\\
\sum_{i=0}^{r-1}\mu_i -T &\leqslant 0, \label{eq:time_bound}\\
c_0 &= 0, \label{eq:boundary}\\
\mu_i,\,c_j &\geqslant 0 && \text{for }0\leqslant i\leqslant r-1,\, 1\leqslant j\leqslant r. \label{eq:nonnegativity}
\end{align}
Let $X(r)\subseteq\reals^r\times\reals^{r+1}\times\reals$
denote the set of all $(\mu,c,T)$ satisfying~(\ref{eq:monotonicity})
to~(\ref{eq:nonnegativity}). Using Lemmas~\ref{lem:bound_opt_by_c}
and~\ref{lem:z_i_by_lambda_resp_mu}, a number $\gamma>1$ is an upper
bound for the approximation ratio of \textit{Quickest-increment},
provided
\[TF-\sum_{j=1}^rc_j\leqslant \gamma\left[TF-\sum_{i=0}^{r-1}\mu_i(r-i)\right]\]
is a valid inequality for $X(r)$, or equivalently
\[F(1-\gamma)T-\sum_{j=1}^rc_j+\gamma\sum_{i=0}^{r-1}\mu_i(r-i)\leqslant 0\]
for all $(\mu,c,T)\in X(r)$. Since  $F\geqslant r$ and $\gamma>1$,
this inequality is strengthened when $F$ is replaced by $r$ on the
left hand side. By duality, the stronger inequality is valid for
$X(r)$, provided the following system has a solution:
\begin{align}
  \sum_{i=0}^{j-1}y_{ij} &\leqslant 1 && \text{for }1\leqslant j\leqslant r, \label{eq:dual_for_c}\\
  z+\sum_{j=i+1}^r(j-i)y_{ij}+x_i-x_{i-1} &\geqslant\gamma(r-i) && \text{for }0\leqslant i\leqslant r-1, \label{eq:dual_for_lambda}\\
  z &\leqslant r(\gamma-1),\label{eq:dual_for_T}\\
  x_{-1}=x_{r-1} &=0, \label{eq:boundaries}\\
  z,x_i &\geqslant 0 && \text{for }0\leqslant i\leqslant r-2, \label{eq:nonoeg_1}\\
  y_{ij} &\geqslant 0 && \text{for }0\leqslant i<j\leqslant r. \label{eq:nonneg_2}
\end{align}
Let
$Y(r,\gamma)\subseteq\reals^{r+1}\times\reals^{r(r+1)/2}\times\reals$
be the set of all $(x,y,z)$ satisfying~(\ref{eq:dual_for_c})
to~(\ref{eq:nonneg_2}). The above argument implies the following
theorem which provides a sufficient condition for $\gamma$ to be an
approximation factor for \textit{Quickest-increment} on all
instances with $F-f=r$.
\begin{theorem}\label{thm:suff_cond}
If $Y(r,\gamma)\neq\emptyset$, then \textit{Quickest-increment} is
a $\gamma$-approximation algorithm for all instances of the
incremental maximum flow problem with unit capacities and
$F-f=r$.\hfill\qed
\end{theorem}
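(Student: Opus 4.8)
The plan is to reduce the approximation guarantee $z^*\le\gamma z_2$ to a single linear inequality and then certify that inequality by an explicit point of $Y(r,\gamma)$. By Lemma~\ref{lem:bound_opt_by_c} we have $z^*\le TF-\sum_{j=1}^r c_j$, and by Lemma~\ref{lem:z_i_by_lambda_resp_mu} we have $z_2=TF-\sum_{i=0}^{r-1}\mu_i(r-i)$, so it suffices to prove
\[
TF-\sum_{j=1}^r c_j\le\gamma\Bigl(TF-\sum_{i=0}^{r-1}\mu_i(r-i)\Bigr).
\]
First I would observe that the instance data $(\mu,c,T)$ always lies in the cone $X(r)$: inequality~(\ref{eq:monotonicity}) is Lemma~\ref{lem:lambda_incr}, inequalities~(\ref{eq:lambda_bound}) are Lemma~\ref{lem:mu_bound_by_c}, and~(\ref{eq:time_bound}),~(\ref{eq:boundary}),~(\ref{eq:nonnegativity}) are immediate from the definitions of $\mu_i$, $c_j$ and $T$. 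Hence it is enough to establish the displayed inequality for \emph{every} $(\mu,c,T)\in X(r)$, a claim that no longer depends on the particular instance.

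The second step eliminates the instance-dependent value $F$. Rearranging the target inequality as $F(1-\gamma)T-\sum_{j}c_j+\gamma\sum_i\mu_i(r-i)\le 0$ and using $F=f+r\ge r$ together with $(1-\gamma)T\le 0$ (since $\gamma>1$ and $T\ge 0$), I would replace $F$ by $r$; this only enlarges the left-hand side, so proving
\[
r(1-\gamma)T-\sum_{j=1}^r c_j+\gamma\sum_{i=0}^{r-1}\mu_i(r-i)\le 0
\qquad\text{for all }(\mu,c,T)\in X(r)
\]
is sufficient. This is now a purely polyhedral statement about the cone $X(r)$.

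The core is a weak-duality certificate. Taking any $(x,y,z)\in Y(r,\gamma)$, I would use $x_i$, $y_{ij}$, $z$ as nonnegative multipliers for constraints~(\ref{eq:monotonicity}),~(\ref{eq:lambda_bound}),~(\ref{eq:time_bound}) and aggregate; because each constraint has right-hand side $0$, the resulting linear functional $L$ satisfies $L\le 0$ on $X(r)$. Comparing coefficients on the nonnegative orthant, the coefficient of $\mu_i$ in $L$ is $x_i-x_{i-1}+\sum_{j=i+1}^r(j-i)y_{ij}+z\ge\gamma(r-i)$ by~(\ref{eq:dual_for_lambda}), the coefficient of $c_j$ is $-\sum_{i=0}^{j-1}y_{ij}\ge -1$ by~(\ref{eq:dual_for_c}), and the coefficient of $T$ is $-z\ge r(1-\gamma)$ by~(\ref{eq:dual_for_T}). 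Since $\mu,c,T\ge 0$, these coefficientwise inequalities give $r(1-\gamma)T-\sum_j c_j+\gamma\sum_i\mu_i(r-i)\le L\le 0$, which is exactly the inequality from the second step; the boundary convention $x_{-1}=x_{r-1}=0$ in~(\ref{eq:boundaries}) is precisely what makes the telescoping $\mu_i$-coefficients agree at the two ends of the monotonicity chain.

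The step I expect to be most delicate is this final bookkeeping: keeping all signs consistent, treating $T$ as a nonnegative (not free) variable so that~(\ref{eq:dual_for_T}) is an inequality rather than an equation, and verifying that the endpoints of the monotonicity chain contribute exactly the vanishing terms $x_{-1}=x_{r-1}=0$ rather than spurious boundary coefficients. Once the multiplier aggregation is carried out correctly, the conclusion $z^*\le\gamma z_2$ is immediate.
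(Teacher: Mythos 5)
Your proposal is correct and follows essentially the same route as the paper: the paper's proof is exactly the inline argument preceding the theorem (bound $z^*$ via Lemma~\ref{lem:bound_opt_by_c}, express $z_2$ via Lemma~\ref{lem:z_i_by_lambda_resp_mu}, place the instance data in $X(r)$, strengthen by replacing $F$ with $r$ using $F\geqslant r$ and $\gamma>1$, and certify validity over $X(r)$ ``by duality'' through a point of $Y(r,\gamma)$). Your only addition is to spell out the weak-duality multiplier aggregation that the paper compresses into that final phrase, and your bookkeeping (including the telescoping boundary convention $x_{-1}=x_{r-1}=0$ and the fact that $T\geqslant 0$ is implied by~(\ref{eq:time_bound}) and~(\ref{eq:nonnegativity})) checks out.
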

Consequently, in order to establish that \textit{Quickest-increment}
is a $3/2$-approximation algorithm, it is sufficient to prove the
$Y(r,3/2)\neq\emptyset$ for all $r\geqslant 2$. To do this, we
need another lemma.
\begin{lemma}\label{lem:initial_segment}
Let $r\geqslant 2$ and $s=\lfloor(r+2)/3\rfloor$. Then the following
system has a solution with non-negative $y_{ij}$ for $0\leqslant
i\leqslant s$, $i+1\leqslant j\leqslant r$:
  \begin{align*}
    \sum_{j=i+1}^{r}(j-i)y_{ij} &\geqslant r-\frac32i &&0\leqslant i\leqslant s,\\
    \sum_{i=0}^{\min\{j-1,s\}}y_{ij} &\leqslant 1 &&1\leqslant j\leqslant r.
  \end{align*}
\end{lemma}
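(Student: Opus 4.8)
The plan is to construct a feasible $(y_{ij})$ explicitly by a greedy ``top-down'' assignment and then to verify that the greedy never runs out of capacity, with this last verification being where the precise value $s=\lfloor(r+2)/3\rfloor$ enters. Write $d_i=r-\tfrac32 i$ for the right-hand side of the $i$-th demand constraint, and note that, for a fixed row $i$, spending one unit of column $j$ contributes weight $(j-i)$ toward $d_i$, so higher-indexed columns are more valuable. I would process the rows in decreasing order $i=s,s-1,\dots,0$, and for each $i$ meet its demand $d_i$ exactly, always drawing from the highest-indexed columns that still have spare capacity (using a fractional amount of the last column touched). Because a row $i'$ only uses columns $j>i'$ and is processed before all rows $i<i'$, the columns consumed so far always occupy a top segment of $\{1,\dots,r\}$; in particular column $i+1$ is still fully available when row $i$ is reached, which keeps the bookkeeping clean and protects the eligibility constraint $j>i$.

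The correctness of this procedure reduces to a single invariant: when row $i$ is about to be processed, the total weighted capacity still available to it,
\[
W_i \;=\; \sum_{j=i+1}^{r}(j-i)\,\bigl(\text{remaining capacity of column }j\bigr),
\]
satisfies $W_i\ge d_i$. I would prove this via a conservation identity. Writing $C_i=\sum_{j=i+1}^r (j-i)=\tfrac12(r-i)(r-i+1)$ for the full weighted capacity seen by row $i$, and $u_{i'}$ for the total (fractional) number of columns consumed by a higher row $i'$, one unit of column $j$ spent by $i'$ removes exactly $(j-i)=(j-i')+(i'-i)$ from $W_i$, of which $(j-i')$ is the credit that $i'$ actually banks; summing over all higher rows yields
\[
W_i \;=\; C_i-\sum_{i'=i+1}^{s}\Bigl[d_{i'}+(i'-i)\,u_{i'}\Bigr].
\]
Hence $W_i\ge d_i$ is equivalent to the clean inequality $C_i\ge \sum_{i'=i}^{s}d_{i'}+\sum_{i'=i+1}^{s}(i'-i)\,u_{i'}$.

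The main obstacle is to control the consumption terms $u_{i'}$. The naive bound $u_{i'}\le d_{i'}$ (valid because every column used has weight at least $1$) is too weak; it already violates the target at $i=0$, $r=4$, $s=2$, where its right-hand side is $12$ while $C_0=10$. What rescues the argument is that the greedy always spends near-maximal-weight columns, so $u_{i'}$ is close to $d_{i'}/(r-i')$ rather than $d_{i'}$; the real work is to turn this into a rigorous upper bound on $u_{i'}$ — equivalently a lower bound on the smallest column weight row $i'$ is forced to touch — exploiting the top-segment structure of the already-consumed columns. I anticipate that substituting such a bound and using $d_{i'}=r-\tfrac32 i'$ collapses $C_i\ge \sum_{i'=i}^{s}d_{i'}+\sum_{i'=i+1}^{s}(i'-i)u_{i'}$ to an elementary polynomial inequality in $r$ and $i$, valid precisely on the range $3s-2\le r\le 3s$ forced by $s=\lfloor(r+2)/3\rfloor$, with the extremal case $i=0$, $r=3s-2$ being tight — consistent with the smallest instances $r=2$, $s=1$ and $r=4$, $s=2$, where every column is filled to capacity.

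A cleaner but, I expect, no-easier alternative is to argue feasibility by LP duality: the system has a solution if and only if $\sum_{j=1}^{r}\max_{0\le i\le\min(s,\,j-1)}p_i(j-i)\ \ge\ \sum_{i=0}^{s}p_i\,d_i$ holds for every $p\ge 0$. The left-hand side is convex and positively homogeneous in $p$, so this cannot be reduced to checking the $0/1$ vectors $p$ (the subset/Hall condition), which is exactly why I would prefer the constructive greedy route and only fall back on duality if the capacity bound proves unwieldy.
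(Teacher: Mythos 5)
Your greedy construction is plausible as a construction --- indeed the top-down greedy succeeds numerically; e.g.\ for $r=4$ it yields $y_{24}=\tfrac12$, $y_{14}=\tfrac12$, $y_{13}=\tfrac12$, $y_{03}=\tfrac12$, $y_{02}=1$, $y_{01}=\tfrac12$ --- but the proposal is not a proof, because the entire difficulty of the lemma is concentrated in the one step you explicitly defer. Your invariant $W_i\geqslant d_i$ requires an upper bound on the consumptions $u_{i'}$, i.e.\ a lower bound on the weight of the worst column row $i'$ is forced to touch; that weight is roughly $r-i'-U_{i'}$ with $U_{i'}=\sum_{i''>i'}u_{i''}$ the cumulative consumption, so the needed bound obeys the nonlinear recursion $u_{i'}\leqslant d_{i'}/(r-i'-U_{i'})$ and must be controlled uniformly in $r$, with the binding instances being the \emph{small} values of $r$, where continuum-style estimates fail and only careful discrete bookkeeping can close the argument. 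You only ``anticipate'' that substituting such a bound collapses everything to a polynomial inequality valid for $3s-2\leqslant r\leqslant 3s$; nothing in the proposal establishes this, and your own observation that the naive bound $u_{i'}\leqslant d_{i'}$ already violates the target at $r=4$ shows the estimate cannot be waved through. (Your tightness claim is also off: at $r=4$ the greedy finishes with half of column $1$ unused, so the case $i=0$, $r=3s-2$ is not extremal for this construction, which removes the heuristic evidence that the anticipated inequality is exactly the right one.)

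For contrast, the paper sidesteps the capacity bookkeeping entirely by induction on $r$. When $r\equiv 0,2\pmod 3$, the value $s=\lfloor(r+2)/3\rfloor$ is unchanged from $r-1$, and appending $y_{ir}=1/(r-i)$ for $0\leqslant i\leqslant s$ raises every row total by exactly $1$ (matching the increase of $r-\tfrac32 i$) while the new column-$r$ load is $\sum_{i=0}^{s}1/(r-i)\leqslant(s+1)/(r-s)\leqslant 1$; when $r\equiv 1\pmod 3$ it steps from $r-3$ (where $s$ drops by $1$), with an explicit base case at $r=4$ and an explicit three-column extension split at $i_0=\lfloor(r-5)/4\rfloor$, each column constraint verified by a one-line closed-form estimate. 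If you want to rescue your route, you must either prove the recursive bound on the $U_{i'}$ with explicit case analysis for small $r$ --- at which point you have done at least as much work as the paper's induction --- or verify the dual (Hall-type) condition for all nonnegative multipliers directly, which, as you yourself note, does not reduce to checking $0/1$ vectors.
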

\begin{proof}
We proceed by induction on $r$. For $r=2$, $s=1$, a solution is
given by $y_{01}=1$, $y_{02}=y_{12}=1/2$. So we
assume $r>2$, and we distinguish two cases.
\begin{description}
\item[Case 1.] $r\equiv 0\text{ or }2\pmod 3$. By induction, there are nonnegative numbers $y_{ij}$ for $0\leqslant i\leqslant s$ and $i+1\leqslant j\leqslant r-1$ satisfying
  \begin{align*}
    \sum_{j=i+1}^{r-1}(j-i)y_{ij} &\geqslant r-1-\frac32i &&0\leqslant i\leqslant s,\\
    \sum_{i=0}^{\min\{j-1,s\}}y_{ij} &\leqslant 1 &&1\leqslant j\leqslant r-1.
  \end{align*}
We extend this by $y_{ir}=1/(r-i)$ for $0\leqslant i\leqslant s$.
Then $\sum_{j=i+1}^{r}(j-i)y_{ij} \geqslant r-\frac32i$ for
$0\leqslant i\leqslant s$ and
\[\sum_{i=0}^sy_{ir}=\sum_{i=0}^s\frac{1}{r-i}\leqslant\frac{s+1}{r-s}\leqslant 1.\]
\item[Case 2.] $r\equiv 1\pmod 3$ and $s=(r+2)/3$. For $r=4$, a solution is given by
\begin{align*}
y_{01} &= 1, & y_{02} &= 1, & y_{03} &= 1/3, & y_{04} &= 0,\\
 && y_{12} &= 0, & y_{13} &= 2/3 , & y_{14} &= 7/18,\\
 && & & y_{23}&=0, & y_{24} &= 1/2.
\end{align*}
For $r>4$, by induction, there are nonnegative numbers $y_{ij}$ for
$0\leqslant i\leqslant s-1$ and $i+1\leqslant j\leqslant r-3$
satisfying
\begin{align*}
    \sum_{j=i+1}^{r-3}(j-i)y_{ij} &\geqslant r-3-\frac32i &&0\leqslant i\leqslant s-1,\\
    \sum_{i=0}^{\min\{j-1,s-1\}}y_{ij} &\leqslant 1 &&1\leqslant j\leqslant r-3.
\end{align*}
Let $i_0=\lfloor(r-5)/4\rfloor$. We extend our solution by
\begin{align*}
y_{ij} &=
\begin{cases}
\frac{3}{(r-2)-i} &\text{for }j=r-2,\\
0 &\text{for }j\in\{r-1,r\}
\end{cases}
 &&\text{for }0\leqslant i\leqslant i_0,\\
y_{ij} &=
\begin{cases}
\frac{3}{(r-1)-i} &\text{for }j=r-1,\\
0 &\text{for }j\in\{r-2,r\}
\end{cases}
 &&\text{for }i_0+1\leqslant i\leqslant s-1,\\
y_{sj} &=
\begin{cases}
  \frac{3(r-2)}{4(r-1)} &\text{for }j=r,\\
0 &\text{for }j<r.
\end{cases}
\end{align*}
Then $\sum_{j=i+1}^{r}(j-i)y_{ij} \geqslant r-\frac32i$ for
$0\leqslant i\leqslant s$,
\[\sum_{i=0}^sy_{i,r-2}=\sum_{i=0}^{i_0}\frac{3}{(r-2)-i}\leqslant\frac{3(i_0+1)}{r-2-i_0}\leqslant\frac{3(r-1)/4}{r-2-(r-5)/4}=1,\]
\begin{multline*}
\sum_{i=0}^sy_{i,r-1}=\sum_{i=i_0+1}^{s-1}\frac{3}{(r-1)-i}\leqslant\frac{3(s-1-i_0)}{r-1-(s-1)}\leqslant\frac{3\left(\frac{r+2}{3}-1-\frac{r-5}{4}\right)}{r-\frac{r+2}{3}}\\
=\frac{\frac14\left(r+11\right)}{\frac13(2r-2)}=\frac{3r+33}{8r-8}\leqslant 1,
\end{multline*}
where we used $r\geqslant 5$ for the last inequality, and
$\sum_{i=0}^sy_{ir}=y_{sr}\leqslant 1$.\qedhere
\end{description}
\end{proof}
Using Lemma~\ref{lem:initial_segment}, we can exhibit a point in
$Y(r,3/2)$.
\begin{lemma}\label{lem:witness_point}
  $Y(r,3/2)\neq\emptyset$ for all $r\geqslant 2$.
\end{lemma}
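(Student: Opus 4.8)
The plan is to exhibit an explicit point $(x,y,z)\in Y(r,3/2)$, assembled from three ingredients: the variable $z$ set to its largest admissible value, the family $(y_{ij})$ supplied by Lemma~\ref{lem:initial_segment}, and a carefully chosen vector $x$ that absorbs the ``tail'' indices. Throughout write $\gamma=3/2$ and $s=\lfloor(r+2)/3\rfloor$, and record first the elementary fact $s\geqslant r/3$ (verified by checking the three residues of $r$ modulo $3$), which will be used repeatedly. I set $z=r/2=r(\gamma-1)$, so that~(\ref{eq:dual_for_T}) holds with equality and $z\geqslant 0$. The governing inequalities are~(\ref{eq:dual_for_lambda}); I will satisfy those with $0\leqslant i\leqslant s$ using the $y$-variables, and those with $s+1\leqslant i\leqslant r-1$ using the $x$-variables.

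For the $y$-variables I take the numbers $y_{ij}$ (for $0\leqslant i\leqslant s$, $i+1\leqslant j\leqslant r$) produced by Lemma~\ref{lem:initial_segment}, and set $y_{ij}=0$ whenever $i>s$; nonnegativity~(\ref{eq:nonneg_2}) is then immediate. Since the only nonzero entries have $i\leqslant s$, the column sums in~(\ref{eq:dual_for_c}) coincide with $\sum_{i=0}^{\min\{j-1,s\}}y_{ij}\leqslant 1$, the second inequality of the lemma, so~(\ref{eq:dual_for_c}) holds. The whole point of taking $z=r/2$ is that it upgrades the lemma's bound $\sum_{j=i+1}^{r}(j-i)y_{ij}\geqslant r-\tfrac32 i$ into exactly what~(\ref{eq:dual_for_lambda}) needs on the initial segment: for $0\leqslant i\leqslant s$ we get $z+\sum_{j=i+1}^{r}(j-i)y_{ij}\geqslant \tfrac r2+(r-\tfrac32 i)=\tfrac32(r-i)$. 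Hence it suffices to keep $x$ nondecreasing there, which I arrange by holding $x_i$ constant on $[0,s]$.

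For the tail I define $x_i=(r-1-i)(3i-r)/4$ for $s\leqslant i\leqslant r-1$, and extend it by the constant value $x_s=(r-1-s)(3s-r)/4$ for $0\leqslant i\leqslant s$. This gives $x_{r-1}=0$, matching~(\ref{eq:boundaries}) (recall $x_{-1}=0$ is given), and the two definitions agree at the seam $i=s$. Nonnegativity~(\ref{eq:nonoeg_1}) is precisely where $s\geqslant r/3$ enters: for $s\leqslant i\leqslant r-1$ both factors $r-1-i\geqslant 0$ and $3i-r\geqslant 3s-r\geqslant 0$ are nonnegative, and on the initial segment $x_i=x_s\geqslant 0$. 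A short computation gives $x_i-x_{i-1}=r-\tfrac32 i$ for $s+1\leqslant i\leqslant r-1$, so with $y_{ij}=0$ there the left-hand side of~(\ref{eq:dual_for_lambda}) equals $\tfrac r2+(r-\tfrac32 i)=\tfrac32(r-i)$, as required; on the initial segment the increments $x_i-x_{i-1}$ vanish for $1\leqslant i\leqslant s$ while $x_0-x_{-1}=x_s\geqslant 0$, so the estimate of the previous paragraph is preserved.

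The main obstacle is not any single estimate but making the three pieces interlock: one must split the index range at $s$ so that the $y$'s from Lemma~\ref{lem:initial_segment} cover $[0,s]$ exactly, observe that $z$ supplies precisely the missing $r/2$, and then find a tail vector $x$ that is simultaneously nonnegative, realizes the forced increments $r-\tfrac32 i$, and lands at $x_{r-1}=0$. The single inequality $s\geqslant r/3$ is what makes all of this work at once, since it guarantees $3i-r\geqslant 0$ on the tail (nonnegativity) and, equivalently, forces $x_s\geqslant 0$ (consistency at the seam). Collecting the verifications of~(\ref{eq:dual_for_c})--(\ref{eq:nonneg_2}) shows $(x,y,z)\in Y(r,3/2)$, which proves the lemma.
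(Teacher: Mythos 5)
Your proposal is correct and takes essentially the same route as the paper's proof: the same $z=r/2$, the same $y_{ij}$ supplied by Lemma~\ref{lem:initial_segment} extended by zero for $i>s$, and a tail vector $x$ with the same forced increments $x_i-x_{i-1}=r-\tfrac32 i$, your $x$ differing from the paper's only by the additive constant $x_s=(r-1-s)(3s-r)/4$. Your re-anchoring so that $x_{r-1}=0$ holds identically (making the constraint at $i=r-1$ tight automatically) and the factored form $x_i=(r-1-i)(3i-r)/4$ with $s\geqslant r/3$ render nonnegativity and the terminal constraint transparent, neatly avoiding the paper's case analysis of $x_{r-2}$ modulo $3$ --- a minor but genuine streamlining of the same argument.
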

\begin{proof}
Let $s=\lfloor(r+2)/3\rfloor$ and let $y_{ij}$ for $0\leqslant
i\leqslant s$ and $i+1\leqslant j\leqslant r$ be a point as
described in Lemma~\ref{lem:initial_segment}. For $i>s$ and
$i+1\leqslant j\leqslant r$, let $y_{ij}=0$. In addition, let
$x_i=0$ for $0\leqslant i\leqslant s$, and $z=r/2$. Finally, let
\begin{align*}
x_i &= (i-s)\left(r-\frac32s-\frac34(i-s+1)\right) &\text{for }s+1\leqslant i\leqslant r-2.
\end{align*}
We claim that $(x,y,z)\in Y(r,3/2)$.
Constraints~(\ref{eq:dual_for_c}) follow immediately
Lemma~\ref{lem:initial_segment}. The same is true for
constraints~(\ref{eq:dual_for_lambda}) for $0\leqslant i\leqslant
s$. For $s<i\leqslant r-2$, constraints~(\ref{eq:dual_for_lambda})
are satisfied because
\[x_i-x_{i-1}=(i-s)\left(r-\frac32s-\frac34(i-s+1)\right)-(i-1-s)\left(r-\frac32s-\frac34(i-s)\right)=r-\frac32i.\]
Finally, constraint~(\ref{eq:dual_for_lambda}) for $i=r-1$ follows from
\[x_{r-2}=(r-2-s)\left(r-\frac32s-\frac34(r-1-s)\right)=
\begin{cases}
  r/2-3/2 & \text{if }r\equiv 0\pmod 3,\\
  r/6-2/3 & \text{if }r\equiv 1\pmod 3,\\
  r/3-7/6 & \text{if }r\equiv 2\pmod 3,\\
\end{cases}\]
which also implies $x_i\geqslant 0$ for all $i$.
\end{proof}
The bound of $3/2$ for the approximation ratio of
\textit{Quickest-increment} is a consequence of
Theorem~\ref{thm:suff_cond} and Lemma~\ref{lem:witness_point}.
\begin{theorem}\label{thm:approx_heur_2}
\textit{Quickest-increment} is a $3/2$-approximation algorithm for
the incremental maximum flow problem with unit capacities.
\end{theorem}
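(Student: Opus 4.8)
The plan is to derive the theorem immediately from the machinery already assembled, namely Theorem~\ref{thm:suff_cond} together with Lemma~\ref{lem:witness_point}. I would fix an arbitrary instance of the incremental maximum flow problem with unit capacities and set $r = F - f$. For $r \geqslant 2$, Lemma~\ref{lem:witness_point} guarantees that $Y(r,3/2)\neq\emptyset$, so the sufficient condition of Theorem~\ref{thm:suff_cond} applies with $\gamma = 3/2$ and yields $z^*\leqslant\tfrac32 z_2$ on every such instance. Since each instance determines a single value of $r$, this disposes of all instances with $r\geqslant 2$ in one stroke.

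It then remains to treat the two small cases $r\in\{0,1\}$, which the witness lemma does not cover. First I would handle $r=0$: here $f=F$, no arc needs to be built to attain an ultimate maximum flow, and \textit{Quickest-increment} records the value $F$ in every period, so $z_2 = TF = z^*$ and the bound holds trivially. For $r=1$, only a single unit of flow increment is required. \textit{Quickest-increment} spends exactly $\mu_0=c_1$ periods at flow value $f$ before reaching $f+1=F$, so by Lemma~\ref{lem:z_i_by_lambda_resp_mu} it achieves $z_2 = TF - c_1$. On the other hand, Lemma~\ref{lem:bound_opt_by_c} gives $z^*\leqslant TF - c_1$, and since $z^*\geqslant z_2$ (as $z^*$ is optimal) we conclude $z_2 = z^*$; hence the heuristic is optimal and the bound holds with room to spare.

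The hard part is already behind us: the real content lies in establishing $Y(r,3/2)\neq\emptyset$ (Lemma~\ref{lem:witness_point}), which in turn rests on the inductive combinatorial construction of Lemma~\ref{lem:initial_segment} and on the LP-duality reformulation leading to the system~(\ref{eq:dual_for_c})--(\ref{eq:nonneg_2}). Granting these, the only step demanding care at this stage is the book-keeping for the boundary cases $r\in\{0,1\}$, since the witness lemma is stated only for $r\geqslant 2$. I would be careful not to overlook them, even though the heuristic turns out to be optimal there and the bound holds with slack: they are precisely the cases where a naive appeal to Theorem~\ref{thm:suff_cond} alone would leave a gap.
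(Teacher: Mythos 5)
Your proposal is correct and follows essentially the same route as the paper, which likewise obtains the theorem as an immediate consequence of Theorem~\ref{thm:suff_cond} and Lemma~\ref{lem:witness_point}. Your explicit treatment of the boundary cases $r\in\{0,1\}$ (where Lemma~\ref{lem:witness_point} does not apply) is a small but genuine improvement in rigor over the paper, which leaves those cases implicit, and your argument there --- $z_2=TF-c_1=z^*$ via Lemmas~\ref{lem:bound_opt_by_c} and~\ref{lem:z_i_by_lambda_resp_mu} for $r=1$, and trivial optimality for $r=0$ --- is sound.
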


Let us examine the example in Figure~\ref{fig:example_2} more
closely. It has initial flow value $f = 0$ and ultimate flow value
$F = 2$ and the essence of the example is that the arcs that are
built to reach a flow value 1 cannot be used for the (ultimate) flow
value 2. Figure \ref{fig:r_3} shows an attempt to generalize the
example to one with initial flow value $f = 0$ and ultimate flow
value $F = 3$. The instance has a time horizon $ T = 7k + 3$.
\begin{figure}[htb]
  \centering
\includegraphics[width=.5\textwidth]{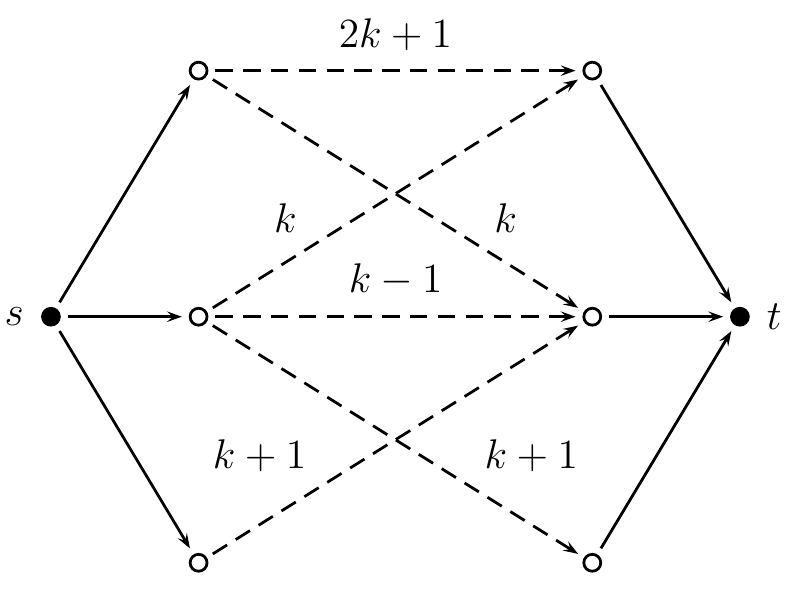}
  \caption{An instance with $r=3$.}
  \label{fig:r_3}
\end{figure}
\textit{Quickest-increment} starts with building the middle path
with $k-1$ potential arcs. Then the cheapest way to achieve a flow
of value 2 is to build the two paths containing $k$ potential arcs
each. Finally, all the remaining arcs have to be built in order to
achieve a flow of value 3. This gives a total flow of
\[
(k-1) \cdot 0 + 2k \cdot 1 + (4k+3) \cdot 2 + 3 = 10k + 9.
\]
On the other hand, the optimum for large $k$ is to build the
ultimate flow immediately which yields
\[
(k+1) \cdot 0 + (k+1) \cdot 1 + (2k+1) \cdot 2 + 3k \cdot 3 = 14k +
3.
\]
What works in favor of the heuristic, in terms of performance, is
that in order for the heuristic not to build the upper path to
achieve a flow of value 2, it has to be long enough, i.e. longer than
the sum of the lengths of the diagonal paths that the heuristic
chooses.

If we define
\[
\varphi(r) = \inf \{ \gamma \ : \ \text{\textit{Quickest-increment}
is a $\gamma$-approximation for instances with } F-f = r \},
\]
then Theorem \ref{thm:approx_heur_2} shows that $\varphi(r)
\leqslant 3/2$. It is possible to show that $\varphi(r) \leqslant
4/3$ for $r \geqslant 70$, and we conjecture that it is possible to
show that $\lim_{r \to \infty} \varphi(r) = 1$.

\subsection{The incremental maximum matching problem}\label{subsec:inc_max_matching}

In our quest for polynomially solvable cases of incremental network
design with maximum flows, we next consider, in addition to unit
capacities on all arcs, imposing a restriction on the structure of
the network. More specifically, a node set $N = V \cup W \cup \{s,
t\}$ and an arc set $A$ consisting of existing arcs $(s,v)$ for all
$v \in V$, existing arcs $(w,t)$ for all $w \in W$, and some arcs
$(v, w)$ with $v \in V$ and $w \in W$, which can be either existing
or potential. Thus, $s$-$t$ flows correspond to matchings in the
bipartite graph $(V \cup W, \{ \{v,w\} : (v,w)\in A\})$, and
therefore we call this special case the \emph{incremental maximum
matching problem}.

The example in Figure \ref{fig:h1-matching} shows that
\textit{Quickest-to-ultimate} may fail to find an optimal solution.
The unique maximum cardinality matching is $\{(1,2), (3,4), \ldots,
(15,16)\}$ and building arcs (1,2), (3,4), (5,6), and (7,8),
followed by arcs (9,10), (11,12), (13,14), and (15,16), followed by
(9,4) results in a total cumulative flow for
\textit{Quickest-to-ultimate} of $4 \cdot 6 + 4 \cdot 7 + 2 \cdot 8
= 68$, whereas building arcs (1,2) and (9,4), followed by arcs
$(3,4),(5,6),\ldots,(15,16)$ results in a total cumulative flow of
$2\cdot 6+7\cdot 7+8=69$.
\begin{figure}[ht]
\centering
\includegraphics[width=.6\textwidth]{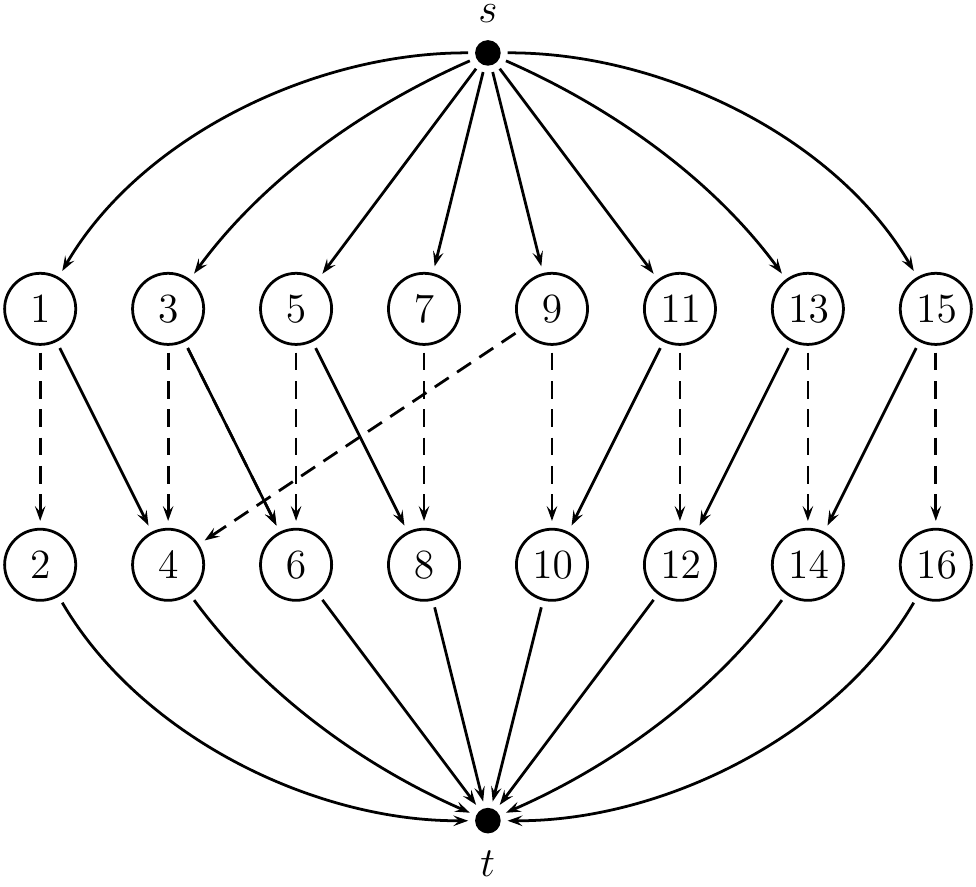}
\caption{Bad instance for
\textit{Quickest-to-ultimate}.}\label{fig:h1-matching}
\end{figure}

Similarly, the example in Figure~\ref{fig:h2-matching} shows that
\textit{Quickest-increment} may fail to find an optimal solution.
The initial maximum matching has size $5$, and the unique cheapest
way to obtain a matching of size 6 is to build arcs $(5,8)$ and
$(7,10)$. After that, all the remaining potential arcs need to be
build to reach a matching of size 7, and the total cumulative flow
for \textit{Quickest-increment} is $2 \cdot 5 + 6 \cdot 6 + 7 = 53$.
On the other hand, building arcs $(1,2), \ldots,(13,14)$ followed by
arcs (5,8) and (7,10) results in a total cumulative flow of $3 \cdot
5 + 3 \cdot 6 + 3 \cdot 7 = 54$.
\begin{figure}[ht]
  \centering
\includegraphics[width=.6\textwidth]{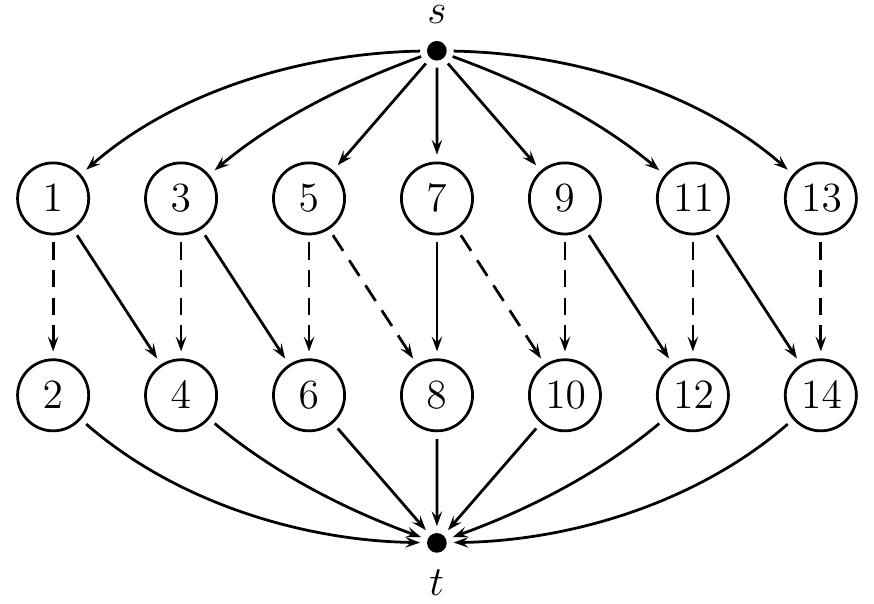}
\caption{Bad instance for
\textit{Quickest-increment}.}\label{fig:h2-matching}
\end{figure}

Thus, even restricting the structure of the network to bipartite
graphs does not trivially give a polynomially solvable case of
incremental network design with maximum flows. We have been able to
improve the performance guarantee of \textit{Quickest-to-ultimate}
to $4/3$ for instances of the incremental maximum matching problem.
The proof is given in the appendix.

\section{Computational study}\label{sec:computations}

We have conducted a computational study to gain insight into the
characteristics that make instances of the incremental maximum flow
problem hard to solve, to compare the performance of the two MIP
formulations, and to assess the performance of the three heuristics.

The following computational tools were used to develop and analyze
the formulations and algorithms: Python 2.7.5, Matplotlib 1.3.1,
NetworkX 1.8.1, and Gurobi 5.5. All computational experiments were
conducted on a 64-bit Win7 with Intel Xeon CPU (E5-1620) with 8
cores and 32GB of RAM using a single thread.

\subsection{Instance generation}\label{subsec:instances}

For the computational experiments in which we compare the
performance of the IP formulations and the heuristics, we use two
classes of instances: one using general graphs and one using layered
graphs.

General graphs are parameterized by the number of nodes $n$, the
expected density $d = 2 |A| / n(n-1) \in \{0.1, 0.3, 0.7\}$, the
expected fraction of potential arcs $p = |A_p|/|A| \in \{0.3,
0.7\}$, and the maximum arc capacity $u_{\max} \in \{1, 3, 10\}$. For
each combination of parameters, 10 random instances are generated.
More specifically, for a given number of nodes $n$, an arc between
two nodes exists with probability $d$, and if an arc between two
nodes exists, the arc is a potential arc with probability $p$ (and
thus an existing arc with probability $1-p$).

Layered graphs consist of a source node, a sink node, and $\ell$
layers in between $(\ell \geqslant 2)$. The source is connected to every
node in the first layer and every node in the last layer is
connected to the sink. The nodes in layer $i$ may be connected to a
node in layer $i+1$. Layered graphs are parameterized by the number
of layers $\ell$, the number of nodes in each layer $n$, the
expected density $d = |A|/(\ell-1)n^2 \in \{0.1, 0.3, 0.7\}$, the
expected fraction of potential arcs $p = |A_p|/|A| \in \{0.3,
0.7\}$, and the maximum arc capacity $u_{\max} \in \{1, 3, 10\}$. For
each combination of parameters, 10 random instances are generated.
The generation proceeds similar to the generation of general
instances. Layered instances are considered because they ensure a
minimum distance between the source and the sink.

\subsection{Comparison of the MIP models}\label{subsec:mip_computations}

We start by determining the characteristics that make an instance
difficult to solve and, at the same time, comparing the performance
of the two MIP formulations presented in Section~\ref{sec:problem}.

Tables \ref{tbl:mip_gen_ave} and \ref{tbl:mip_layered_ave} present
average performance statistics for various combinations of the
instance parameters (with $n=35$ for general graphs and $\ell = 5$
and $n = 10$ for layered graphs). More specifically, we report the
arc density ($d$), the fraction of potential arcs ($p$), the maximum
arc capacity ($u_{\max}$), the difference between the value of an
ultimate and an initial maximum flow ($F - f$), the number of
instances not solved to optimality within the time limit (\#), the
average initial gap (initial gap), computed as
$(z^{\text{LP}}-z^{\text{IP}})/z^{\text{IP}}$, where $z^{\text{LP}}$ the value of the LP
relaxation and $z^{\text{IP}}$ is the value of the best solution found, the
average final gap over the instances not solved to optimality (final
gap), where in the gap computation $z^{\text{LP}}$ is replaced by the best
bound at termination, and the average solution time (time) and
average number of nodes in the search tree (\#nodes) over the
instances solved to optimality.

\begin{table}[htdp]
\caption{Comparison of MIP models for general graphs with $n = 35$;
time limit 1 hour} \footnotesize
\begin{center}
\begin{tabular}{@{} rrrrrrrrrrrrrr @{}}
\toprule
\multicolumn{4}{c}{} & \multicolumn{5}{c}{IMFP$^1$} & \multicolumn{5}{c}{IMFP$^2$} \\
\cmidrule(lr){5-9}\cmidrule(lr){10-14}
    &     &           &       &  & initial  & final    &       &         &  & initial  & final    &        &          \\
$d$ & $p$ & $u_{\max}$ & $F-f$ &\#& gap      & gap      & time  & \#nodes &\#& gap      & gap      & time   & \#nodes \\
\midrule
    0.1   & 0.3   & 1     & 0.9   &       & 0.12\% &       & 0.13  & 1.5   &       & 0.00\% &       & 0.01  & 0.0 \\
    0.1   & 0.3   & 3     & 1.5   &       & 0.28\% &       & 0.15  & 0.5   &       & 0.43\% &       & 0.00  & 0.0 \\
    0.1   & 0.3   & 10    & 7.3   &       & 0.36\% &       & 0.22  & 1.5   &       & 0.94\% &       & 0.05  & 4.5 \\
\addlinespace
    0.1   & 0.7   & 1     & 2.1   &       & 0.55\% &       & 23.88 & 309.1 &       & 0.00\% &       & 0.00  & 0.0 \\
    0.1   & 0.7   & 3     & 4.0   &       & 1.17\% &       & 7.33  & 325.9 &       & 1.11\% &       & 0.03  & 0.0 \\
    0.1   & 0.7   & 10    & 11.6  &       & 1.46\% &       & 73.66 & 1,694.7 &       & 1.83\% &       & 1.73  & 107.1 \\
\addlinespace
    0.3   & 0.3   & 1     & 1.9   &       & 0.02\% &       & 1.56  & 75.5  &       & 0.00\% &       & 0.01  & 0.0 \\
    0.3   & 0.3   & 3     & 7.7   &       & 0.07\% &       & 3.01  & 66.0  &       & 0.13\% &       & 0.05  & 0.0 \\
    0.3   & 0.3   & 10    & 11.7  &       & 0.03\% &       & 1.81  & 7.0   &       & 0.13\% &       & 0.09  & 0.0 \\
\addlinespace
    0.3   & 0.7   & 1     & 6.4   &       & 0.13\% &       & 288.94 & 1,461.1 &       & 0.00\% &       & 0.04  & 0.0 \\
    0.3   & 0.7   & 3     & 12.8  & 1     & 0.14\% & 0.02\% & 594.46 & 3,529.1 &       & 0.17\% &       & 0.46  & 0.7 \\
    0.3   & 0.7   & 10    & 37.0  & 1     & 0.14\% & 0.10\% & 607.91 & 925.4 & 1     & 0.28\% & 0.04\% & 9.91  & 73.6 \\
\bottomrule
\end{tabular}
\end{center}
\label{tbl:mip_gen_ave}
\end{table}%

\begin{table}[htdp]
\caption{Comparison of MIP models for layered graphs with $\ell = 5$
and $n = 10$; time limit 1 hour.} \footnotesize
\begin{center}
\begin{tabular}{@{} rrrrrrrrrrrrrr @{}}
\toprule
\multicolumn{4}{c}{} & \multicolumn{5}{c}{IMFP$^1$} & \multicolumn{5}{c}{IMFP$^2$} \\
\cmidrule(lr){5-9}\cmidrule(lr){10-14}
    &     &           &       &  & initial  & final    &       &         &  & initial  & final    &        &          \\
$d$ & $p$ & $u_{\max}$ & $F-f$ &\#& gap      & gap      & time  & \#nodes &\#& gap      & gap      & time   & \#nodes \\

\midrule
    0.1   & 0.3   & 1     & 1.6   &       & 2.13\% &       & 0.02 & 6.3   &       & 0.00\% &       & 0.00  & 0.0 \\
    0.1   & 0.3   & 3     & 1.3   &       & 1.67\% &       & 0.01 & 0.0     &       & 0.99\% &       & 0.00  & 0.0 \\
    0.1   & 0.3   & 10    & 6.6   &       & 4.76\% &       & 0.03 & 8.1   &       & 6.46\% &       & 0.01  & 0.0 \\
\addlinespace
    0.1   & 0.7   & 1     & 3.1   &       & 2.84\% &       & 1.35 & 502.9 &       & 0.00\% &       & 0.00  & 0.0 \\
    0.1   & 0.7   & 3     & 4.0   &       & 7.20\% &       & 3.24 & 905.5 &       & 6.79\% &       & 0.01  & 0.0 \\
    0.1   & 0.7   & 10    & 5.5   &       & 5.69\% &       & 0.16 & 119.8 &       & 6.77\% &       & 0.01  & 0.0 \\
\addlinespace
    0.3   & 0.3   & 1     & 1.7   &       & 0.09\% &       & 0.36  & 3.3   &       & 0.00\% &       & 0.00  & 0.0 \\
    0.3   & 0.3   & 3     & 6.5   &       & 0.86\% &       & 11.23 & 1603.6 &       & 1.00\% &       & 0.08  & 18.7 \\
    0.3   & 0.3   & 10    & 15.5  &       & 0.80\% &       & 3.76 & 418.4 &       & 1.29\% &       & 3.36  & 394.5 \\
\addlinespace
    0.3   & 0.7   & 1     & 8.9   & 7     & 0.72\% & 0.35\% & 1473.03 & 77596.0 &       & 0.00\% &       & 0.03  & 0.0 \\
    0.3   & 0.7   & 3     & 14.8  & 8     & 3.43\% & 0.52\% & 1398.13 & 67872.5 & 1     & 3.30\% & 0.24\% & 42.99 & 1574.8 \\
    0.3   & 0.7   & 10    & 32.0  & 9     & 4.64\% & 0.66\% & 1225.42 & 17924.0 & 9     & 5.07\% & 0.65\% & 1,705.86 & 15948.0 \\
\bottomrule
\end{tabular}
\end{center}
\label{tbl:mip_layered_ave}
\end{table}%

First and foremost, we observe that IMFP$^2$ performs significantly
better than IMFP$^1$; not only are more instances solved to
optimality within the time limit of 1 hour, but the instances are
solved much faster. (Interestingly, the initial gap for IMFP$^2$
tends to be larger than the initial gap for IMFP$^1$ when the
maximum arc capacity is greater than 1.)

As expected, the instances get more difficult when the density, and
thus the number of arcs, increases, when the fraction of potential
arcs, and thus the planning horizon, increases, and when the maximum
arc capacity increases.

IMFP$^1$ is impacted primarily by the length of the planning
horizon, i.e., the number of possible build sequences, whereas
IMFP$^2$ is impacted primarily by the difference between the value
of the ultimate flow $F$ and the value of the initial flow $f$,
i.e., the flow increase that has to be accommodated.

The density $d$ relates to the number of arcs in the graph and thus
the number of paths from source to sink and thus the maximum flow.
Therefore, if the density increases, then the length of the planning
horizon and the difference between $F$ and $f$ increase, which
impacts the performance of both formulations. The fraction of
potential arcs $p$ relates to the planning horizon and thus the
number of possible sequences in which arcs can be build, but also
impacts $f$ and thus the difference between $F$ and $f$. Therefore,
if this fraction increases, then the length of the planning horizon
and the difference between $F$ and $f$ increase, which should impact
the performance of both formulations, but most likely the
performance of IMFP$^1$. The maximum arc capacity $u_{\max}$ impacts
the difference between $F$ and $f$. Therefore, if the arc capacity
increases, then the difference between $F$ and $f$ increases, which
impacts the performance of IMFP$^2$, but also IMFP$^1$.

To further investigate the behavior of the two formulations, Tables
\ref{tbl:mip_gen} and \ref{tbl:mip_layered} present detailed
performance statistics for 10 instances in the most difficult class,
i.e., $d=0.3$, $p=0.7$, and $u_{\max}=10$, when the time limit is
increased to 4 hours. The instances are presented in nondecreasing
order of the difference between the value of the ultimate flow and
the value of the initial flow, i.e., $F - f$.

\begin{table}[htdp]
\caption{Comparison of MIP models for general graphs with $n = 35$,
$d=0.3$, $p=0.7$, $u_{\max}=10$; time limit 4 hours.} \footnotesize
\begin{center}
\begin{tabular}{@{} r rrrr rrrr @{}}
\toprule
 &\multicolumn{4}{c}{IMFP$^1$} & \multicolumn{4}{c}{IMFP$^2$} \\
\cmidrule(lr){2-5}\cmidrule(lr){6-9}
      & initial & final &      &          & initial & final &      &  \\
$F-f$ & gap     & gap   & time & \# nodes & gap     & gap   & time & \# nodes \\
\midrule
15    & 0.08\% & &    48.08 & 34    & 0.16\% & & 0.25  & 0 \\
25    & 0.07\% & &    38.20 & 152   & 0.16\% & & 0.50  & 0 \\
26    & 0.08\% & &   823.75 & 980   & 0.17\% & & 1.72  & 0 \\
38    & 0.24\% & & 3,125.46 & 5,437 & 0.34\% & & 11.13 & 254 \\
39    & 0.20\% & &    83.41 & 520   & 0.36\% & & 1.39  & 0 \\
39    & 0.12\% & &    69.91 & 306   & 0.27\% & & 1.28  & 0 \\
45    & 0.26\% & 0.04\% & 14,400.00 & 40,804 & 0.35\% & & 365.06 & 865 \\
48    & 0.26\% & & 5,989.53 & 23,078 & 0.39\% & & 40.11 & 283 \\
55    & 0.32\% & & 5,134.81 & 24,585 & 0.45\% & & 6.83  & 0 \\
71    & 0.25\% & & 4,333.61 & 30,892 & 0.41\% & & 63.86 & 135 \\
\bottomrule
\end{tabular}
\end{center}
\label{tbl:mip_gen}
\end{table}%

\begin{table}[htdp]
\caption{Comparison of MIP models for layered graphs with $\ell =
5$, $n=10$, $d=0.3$, $p=0.7$, $u_{\max}=10$; time limit 4 hours.}
\footnotesize
\begin{center}
\begin{tabular}{@{} r rrrr rrrr @{}}
\toprule
 &\multicolumn{4}{c}{IMFP$^1$} & \multicolumn{4}{c}{IMFP$^2$} \\
\cmidrule(lr){2-5}\cmidrule(lr){6-9}
      & initial & final &      &          & initial & final &      &  \\
$F-f$ & gap     & gap   & time & \# nodes & gap     & gap   & time & \# nodes \\
\midrule
24    & 2.84\% &        &  3,468.53 & 26,927    & 3.23\% &  & 625.37 & 21,858 \\
26    & 3.82\% & 0.07\% & 14,400.00 & 1,041,170 & 4.27\% &  & 1,450.65 & 102,528 \\
30    & 5.30\% & 0.19\% & 14,400.00 & 210,589   & 5.64\% &  & 2,602.19 & 99,424 \\
34    & 2.85\% & 0.12\% & 14,400.00 & 260,528   & 3.26\% & 0.10\% & 14,400.00 & 212,306 \\
35    & 6.09\% & 0.41\% & 14,400.00 & 354,786   & 6.48\% & 0.26\% & 14,400.00 & 551,579 \\
38    & 4.21\% & 0.10\% & 14,400.00 & 281,519   & 4.67\% &  & 12,281.30 & 196,572 \\
39    & 2.40\% &        & 10,759.84 & 300,278   & 2.83\% & 0.22\% & 14,400.00 & 202,730 \\
39    & 5.24\% & 0.44\% & 14,400.00 & 130,664   & 5.59\% & 0.24\% & 14,400.00 & 312,643 \\
44    & 3.91\% & 0.52\% & 14,400.00 & 37,954    & 4.29\% & 0.42\% & 14,400.00 & 72,215 \\
45    & 3.09\% & 0.32\% & 14,400.00 & 52,449    & 3.64\% & 0.39\% & 14,400.00 & 97,568 \\
\bottomrule
\end{tabular}
\end{center}
\label{tbl:mip_layered}
\end{table}%

We see that the different between the value of the ultimate flow and
the value of the initial flow, i.e., $F - f$, is a good predictor of
the difficulty of an instance for both formulations.  The larger the
value of $F - f$, the more likely it is that the instance cannot be
solved and if an instance cannot be solved, the more likely it is
that the final gap is large. These ``trends'' are most clearly seen
in the results for the layered instances.  For the layered
instances, it is also interesting to note that there is one instance
that can be solved by IMFP$^1$, but not by IMFP$^2$.

Additional analysis and experimentation revealed that in addition to
the difference $F-f$, the number of flow increases from $f$ to $F$
also seems to impact solution time, with more flow increases usually
resulting in longer solution times. The number of flow increases
from $f$ to $F$ depends on the arc capacities in an instance, and on
how these arc capacities interact on the different paths from source
to sink.

\subsection{Heuristics}\label{subsec:heuristics_computational}

For the difficult instances, we compare the performance of
\textit{Quickest-increment}, \textit{Quickest-to-ultimate}, and
\textit{Quickest-to-target} with $p = 2$, $r_0 = f, r_1 = \lfloor
(F-f) / 2 \rfloor$, and $r_2 = F$. The results are shown in Tables
\ref{tbl:alg_gen} and \ref{tbl:alg_layered}, where we report the
difference between the ultimate and initial flow values ($F-f$), for
each of the heuristics the cumulative flow (flow), the relative
difference to the best known cumulative flow ($\Delta =
(z^{\text{best}} - z^{\text{heur}})/z^{\text{best}}$), and the
solution time (time), and for IMFP$^2$ the value of the first
feasible solution (first), the relative difference to the best
solution ($\Delta = (z^{\text{best}} -
z^{\text{first}})/z^{\text{best}}$), the time to reach the first
feasible solution, the value of the best solution (best), and the
time to reach the best feasible solution (time). A time limit of
3,600 seconds was imposed for the solution of IMFP$^2$.
\begin{sidewaystable}[htp]
\caption{Comparison of heuristic performance for general graphs with
$n=35$, $d=0.3$, $p=0.7$, and $u_{\max}=10$.} \scriptsize
\begin{center}
\begin{tabular}{@{} rr rrr rrr rrr rrr rr @{}}
\toprule
 & & \multicolumn{3}{c}{\textit{Quickest-increment}} & \multicolumn{3}{c}{\textit{Quickest-to-ultimate}} & \multicolumn{3}{c}{\textit{Quickest-to-target}} & \multicolumn{5}{c}{IMFP$^2$} \\
\cmidrule(lr){3-5}\cmidrule(lr){6-8}\cmidrule(lr){9-11}\cmidrule(lr){12-16}
instance & $F-f$ & flow & $\Delta$ & time & flow & $\Delta$ & time & flow & $\Delta$ & time & first & $\Delta$ & time & best & time \\
    \midrule
    1     & 37    & 12,887 & 0.0002 & 1.20  & 12,884 & 0.0005 & 0.61  & 12,884 & 0.0005 & 0.70  & 12,868 & 0.0017 & 0.13  & 12,890 & 6.13 \\
    2     & 14    & 8,690 & 0.0000 & 0.37  & 8,690 & 0.0000 & 0.24  & 8,690 & 0.0000 & 0.34  & 8,690 & 0.0000 & 0.03  & 8,690 & 0.03 \\
    3     & 61    & 16,331 & 0.0021 & 2.23  & 16,347 & 0.0012 & 0.86  & 16,341 & 0.0015 & 0.91  & 16,304 & 0.0038 & 0.27  & 16,366 & 147.27 \\
    4     & 30    & 11,673 & 0.0005 & 1.07  & 11,679 & 0.0000 & 0.52  & 11,679 & 0.0000 & 0.61  & 11,676 & 0.0003 & 0.09  & 11,679 & 5.09 \\
    5     & 26    & 9,456 & 0.0002 & 0.74  & 9,456 & 0.0002 & 0.47  & 9,458 & 0.0000 & 0.60  & 9,453 & 0.0005 & 0.06  & 9,458 & 1.06 \\
    6     & 33    & 9,687 & 0.0003 & 1.06  & 9,689 & 0.0001 & 0.43  & 9,689 & 0.0001 & 0.55  & 9,658 & 0.0033 & 0.08  & 9,690 & 2.08 \\
    7     & 58    & 18,902 & 0.0002 & 1.48  & 18,904 & 0.0001 & 0.65  & 18,905 & 0.0000 & 0.77  & 18,808 & 0.0051 & 1.31  & 18,905 & 55.31 \\
    8     & 44    & 15,706 & 0.0000 & 1.25  & 15,705 & 0.0001 & 0.63  & 15,706 & 0.0000 & 0.70  & 15,701 & 0.0003 & 0.11  & 15,706 & 1.11 \\
    9     & 27    & 10,098 & 0.0001 & 1.01  & 10,097 & 0.0002 & 0.52  & 10,096 & 0.0003 & 0.52  & 10,078 & 0.0021 & 0.08  & 10,099 & 0.47 \\
    10    & 26    & 8,418 & 0.0001 & 0.85  & 8,418 & 0.0001 & 0.52  & 8,418 & 0.0001 & 0.56  & 8,418 & 0.0001 & 0.03  & 8,419 & 2.03 \\
    \midrule
    \textbf{} & \textbf{} & \textbf{} & \textbf{0.0004} & \textbf{1.126} & \textbf{} & \textbf{0.0002} & \textbf{0.545} & \textbf{} & \textbf{0.0003} & \textbf{0.626} & \textbf{} & \textbf{0.0017} & \textbf{0.219} & \textbf{} & \textbf{22.058} \\
    \bottomrule
\end{tabular}
\end{center}
\label{tbl:alg_gen}
\end{sidewaystable}%

\begin{sidewaystable}[htp]
\footnotesize \caption{Comparison of heuristic performance for
layered graphs with $\ell = 5$, $n = 10$, $d=0.3$, $p=0.7$, and
$u_{\max}=10$.} \scriptsize
\begin{center}
\begin{tabular}{@{} rr rrr rrr rrr rrr rr @{}}
\toprule
 & & \multicolumn{3}{c}{\textit{Quickest-increment}} & \multicolumn{3}{c}{\textit{Quickest-to-ultimate}} & \multicolumn{3}{c}{\textit{Quickest-to-target}} & \multicolumn{5}{c}{IMFP$^2$} \\
\cmidrule(lr){3-5}\cmidrule(lr){6-8}\cmidrule(lr){9-11}\cmidrule(lr){12-16}
instance & $F-f$ & flow & $\Delta$ & time & flow & $\Delta$ & time & flow & $\Delta$ & time & first & $\Delta$ & time & best & time \\
\midrule
    1     & 29    & 3,254 & 0.0157 & 1.22  & 3,295 & 0.0033 & 0.82  & 3,294 & 0.0036 & 0.84  & 3,249 & 0.0172 & 0.08  & 3,306 & 19.08 \\
    2     & 30    & 2,188 & 0.0271 & 1.21  & 2,233 & 0.0071 & 0.83  & 2,226 & 0.0102 & 0.80  & 2,190 & 0.0262 & 0.05  & 2,249 & 15.05 \\
    3     & 37    & 2,946 & 0.0081 & 0.89  & 2,956 & 0.0047 & 0.60  & 2,961 & 0.0030 & 0.64  & 2,879 & 0.0306 & 0.11  & 2,970 & 31.11 \\
    4     & 16    & 1,739 & 0.0063 & 0.49  & 1,743 & 0.0040 & 0.45  & 1,750 & 0.0000 & 0.50  & 1,731 & 0.0109 & 0.01  & 1,750 & 0.30 \\
    5     & 41    & 3,321 & 0.0160 & 1.26  & 3,352 & 0.0068 & 0.85  & 3,347 & 0.0083 & 0.93  & 3,323 & 0.0154 & 0.06  & 3,375 & 15.06 \\
    6     & 33    & 2,601 & 0.0069 & 1.14  & 2,612 & 0.0027 & 0.90  & 2,611 & 0.0031 & 0.94  & 2,525 & 0.0359 & 0.06  & 2,619 & 43.06 \\
    7     & 43    & 3,543 & 0.0250 & 1.41  & 3,604 & 0.0083 & 1.23  & 3,607 & 0.0074 & 1.06  & 3,505 & 0.0355 & 0.19  & 3,634 & 298.19 \\
    8     & 22    & 1,684 & 0.0024 & 0.57  & 1,685 & 0.0018 & 0.51  & 1,685 & 0.0018 & 0.56  & 1,661 & 0.0160 & 0.02  & 1,688 & 1.02 \\
    9     & 46    & 3,249 & 0.0107 & 1.25  & 3,206 & 0.0238 & 1.31  & 3,260 & 0.0073 & 1.10  & 3,194 & 0.0274 & 0.13  & 3,284 & 27.13 \\
    10    & 54    & 3,760 & 0.0079 & 1.34  & 3,736 & 0.0142 & 1.51  & 3,780 & 0.0026 & 1.23  & 3,652 & 0.0364 & 0.13  & 3,790 & 37.13 \\
    \textbf{} & \textbf{} & \textbf{} & \textbf{0.0126} & \textbf{1.078} & \textbf{} & \textbf{0.0077} & \textbf{0.901} & \textbf{} & \textbf{0.0047} & \textbf{0.86} & \textbf{} & \textbf{0.0252} & \textbf{0.084} & \textbf{} & \textbf{48.713} \\
\bottomrule
\end{tabular}
\end{center}
\label{tbl:alg_layered}
\end{sidewaystable}%

As expected, the three heuristics are able to produce high-quality
solutions quickly (with \textit{Quickest-increment} possibly being
slightly slower than the other two). For these instances, the hybrid
heuristic \textit{Quickest-to-target} seems to performs best (on
average within 0.03\% from the best known solution for instances on
general graphs and within 0.47\% from the best known solution for
instances on layered graphs). However, IMFP$^2$ also produces
high-quality solutions when given an hour of computing time, often
noticeably better than the heuristic solutions.

As the instances used for the above experiments are relatively
small, we performed a final experiment in which we use general and
layered graphs with 300 nodes to investigate whether the performance
of the heuristics (both in terms of quality and time) scales well.
Instances of this size are beyond what can be solved to optimality
in a reasonable amount of time with IMFP$^2$. Therefore, instead, we
solved the instances twice with Gurobi parameter
\texttt{SolutionLimit} set to one the first time and to two the
second time. When \texttt{SolutionLimit} is set to one, Gurobi uses
its embedded heuristics to generate a feasible solution and does not
even solve the linear programming relaxation. When
\texttt{SolutionLimit} is set to two, Gurobi solves at least one
linear programming relaxation.  The results can be found in Table
\ref{tab:heur-large} and Table \ref{tab:heur-large2}.

\begin{sidewaystable}[htp]
\footnotesize
  \caption{Comparison of heuristic performance on large instances;
  general graphs with $n = 300$, $d=0.3$, $p=0.7$, and $u_{\max}=10$.}
\scriptsize
\begin{center}
\begin{tabular}{rr  rr  rr  rr  rrrr}

    \toprule

 & & \multicolumn{2}{c}{\textit{Quickest-increment}} & \multicolumn{2}{c}{\textit{Quickest-to-ultimate}} & \multicolumn{2}{c}{\textit{Quickest-to-target}} & \multicolumn{4}{c}{IMFP$^2$} \\

\cmidrule(lr){3-4}\cmidrule(lr){5-6}\cmidrule(lr){7-8}\cmidrule(lr){9-12}

instance & $F-f$ & flow & time & flow & time & flow & time & first & time & second & time \\

    \midrule
    1     & 349   & \textbf{8,510,022} & 573.89 & 8,510,019 & 63.96 & 8,510,018 & 69.64 & 8,501,339 & 346.50 & \textbf{8,510,022} & 2,618.57 \\
    2     & 276   & \textbf{7,748,950} & 607.13 & 7,748,926 & 66.64 & 7,748,936 & 80.42 & 7,741,651 & 261.88 & \textbf{7,748,950} & 1,551.84 \\
    3     & 306   & \textbf{7,455,851} & 708.26 & 7,455,825 & 76.61 & 7,455,840 & 82.10 & 7,446,419 & 294.29 & \textbf{7,455,851} & 2,043.68 \\
    4     & 302   & 7,693,036 & 553.93 & 7,693,080 & 62.84 & 7,693,084 & 68.21 & 7,683,753 & 298.44 & \textbf{7,693,095} & 2,820.65 \\
    5     & 303   & \textbf{8,561,146} & 636.09 & 8,561,146 & 74.13 & 8,561,141 & 81.39 & 8,551,909 & 295.49 & \textbf{8,561,146} & 2,194.76 \\
    6     & 300   & \textbf{8,173,176} & 647.99 & 8,173,175 & 68.08 & 8,173,174 & 73.66 & 8,164,741 & 291.10 & 8,173,173 & 1,837.50 \\
    7     & 320   & \textbf{8,551,244} & 677.60 & 8,551,226 & 70.46 & 8,551,237 & 73.81 & 8,543,003 & 318.91 & \textbf{8,551,244} & 2,646.82 \\
    8     & 378   & \textbf{9,707,335} & 760.95 & 9,707,321 & 85.66 & 9,707,334 & 91.26 & 9,696,827 & 390.26 & \textbf{9,707,335} & 2,922.14 \\
    9     & 335   & \textbf{8,078,470} & 626.21 & 8,078,470 & 75.93 & 8,078,468 & 80.53 & 8,068,707 & 331.73 & \textbf{8,078,470} & 2,207.85 \\
    10    & 285   & \textbf{7,676,476} & 546.08 & 7,676,432 & 64.63 & 7,676,418 & 69.36 & 7,668,636 & 272.20 & \textbf{7,676,476} & 2,051.99 \\
    \midrule
    \textbf{} & \textbf{} & \textbf{} & \textbf{633.81} & \textbf{} & \textbf{70.894} & \textbf{} & \textbf{77.038} & \textbf{} & \textbf{310.08} & \textbf{} & \textbf{2,289.58} \\
    \bottomrule

\end{tabular}%
\end{center}
  \label{tab:heur-large}%
\end{sidewaystable}%

\begin{sidewaystable}[htp]
\footnotesize
  \caption{Comparison of heuristic performance on large instances;
  layered graphs with $\ell = 10$, $n = 30$, $d=0.3$, $p=0.7$, and $u_{\max}=10$.}
\scriptsize
\begin{center}
\begin{tabular}{rr  rr  rr  rr  rrrr}

    \toprule

 & & \multicolumn{2}{c}{\textit{Quickest-increment}} & \multicolumn{2}{c}{\textit{Quickest-to-ultimate}} & \multicolumn{2}{c}{\textit{Quickest-to-target}} & \multicolumn{4}{c}{IMFP$^2$} \\

\cmidrule(lr){3-4}\cmidrule(lr){5-6}\cmidrule(lr){7-8}\cmidrule(lr){9-12}

instance & $F-f$ & flow & time & flow & time & flow & time & first & time & second & time \\

    \midrule
    1     & 52    & 289,131 & 54.44 & 289,171 & 5.98  & \textbf{289,185} & 22.98 & 288,993 & 52.97 & 289,164 & 165.88 \\
    2     & 25    & 237,934 & 8.18  & 237,935 & 2.63  & \textbf{237,937} & 3.40  & 237,919 & 7.03  & 237,920 & 23.67 \\
    3     & 47    & 304,763 & 20.03 & 304,779 & 32.59 & \textbf{304,781} & 6.77  & 304,586 & 21.80 & 304,775 & 89.02 \\
    4     & 40    & \textbf{233,395} & 10.85 & 233,394 & 5.75  & \textbf{233,395} & 3.91  & 233,354 & 13.92 & 233,362 & 49.63 \\
    5     & 21    & 251,797 & 10.33 & 251,800 & 4.51  & \textbf{251,801} & 4.07  & 251,762 & 5.83  & 251,790 & 24.74 \\
    6     & 30    & 258,451 & 12.97 & 258,448 & 4.01  & 258,448 & 4.99  & 258,385 & 28.83 & \textbf{258,452} & 75.28 \\
    7     & 33    & \textbf{243,606} & 11.22 & 243,604 & 4.24  & 243,604 & 5.55  & 243,535 & 21.50 & 243,542 & 73.16 \\
    8     & 23    & \textbf{195,436} & 5.05  & 195,431 & 1.93  & \textbf{195,436} & 2.54  & 195,160 & 1.19  & 195,406 & 4.67 \\
    9     & 29    & \textbf{255,092} & 9.61  & 255,083 & 4.26  & 255,086 & 6.97  & 254,978 & 13.23 & 255,053 & 48.85 \\
    10    & 50    & 280,063 & 18.89 & 280,070 & 5.66  & \textbf{280,078} & 10.04 & 280,003 & 21.67 & 280,077 & 126.33 \\
    \midrule
    \textbf{} & \textbf{} & \textbf{} & \textbf{16.157} & \textbf{} & \textbf{7.156} & \textbf{} & \textbf{7.122} & \textbf{} & \textbf{18.797} & \textbf{} & \textbf{68.123} \\
    \bottomrule

\end{tabular}%
\end{center}
  \label{tab:heur-large2}%
\end{sidewaystable}%

A few interesting observations can be made. For general instances of
this size, \textit{Quickest-increment} performs best (it produces
the best solution for all but one instance), but it is noticeably
slower than \textit{Quickest-to-ultimate} and
\textit{Quickest-to-target}. The first solution found by IMFP$^2$ is
the worst for all instances, but the second solution found by
IMFP$^2$ matches the best in all but one instance. However, it takes
substantially longer to find that solution. For layered instances of
this size, the situation is not as clear cut, both
\textit{Quickest-increment} and \textit{Quickest-to-ultimate}
perform well and outperform IMFP$^2$ in all but one instance.
Overall, \textit{Quickest-to-target} produces high-quality solution
very efficiently for all instances and should be the method of
choice when time is important.

\section{Final remarks}\label{sec:final}

We have studied the incremental network design with maximum flows.
We have investigated the performance of mixed integer programming
formulations and we have analyzed the performance of natural
heuristics, both theoretically and empirically.

On the theoretic side, the complexity status of the incremental
maximum flow problem for instances with unit arc capacities remains
open, even when the network is restricted to bipartite graphs. On
the algorithmic side, we have identified classes of instances where
integer programming solvers struggle and where the natural
heuristics, although fast, do not necessarily provide high-quality
solutions. Thus, there is an opportunity to explore more
sophisticated heuristics, e.g., metaheuristics.

Finally, there are various other incremental network design problems
that are worth studying, e.g., the incremental multicommodity flow
problem.


\newpage

\section*{Appendix}

\noindent \textbf{Complexity of Incremental Maximum Flow Problem}

\begin{theorem}\label{thm:hardness}
The incremental maximum flow problem is NP-hard even when restricted
to instances where every existing arc has capacity $1$ and every
potential arc has capacity $3$.
\end{theorem}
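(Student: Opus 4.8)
The plan is to give a polynomial-time reduction from \textsc{3-Sat}, whose clause width of three is exactly what the capacity-$3$ potential arcs are meant to encode. Since every capacity in the constructed instance is bounded by $3$ and the horizon can be taken as $T=\lvert A_p\rvert+1$ (polynomial in the network size, hence in the formula size), all numerical parameters of the instance are polynomially bounded; under a unary encoding the instance still has polynomial size, so an ordinary polynomial reduction already yields \emph{strong} NP-hardness and I never have to introduce large numbers. Throughout I would argue in terms of the reformulation established in Section~\ref{sec:problem}: maximizing the cumulative flow is equivalent to minimizing $\sum_{a\in A_p}\sum_{k=1}^r y_{ak}=\sum_{k=1}^r\lvert B_k\rvert$ over nested families $\emptyset\subseteq B_1\subseteq\cdots\subseteq B_r$ of built potential arcs, where $B_k$ is required to enable an $s$-$t$ flow of value at least $f+k$. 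This ``nested-cover'' picture, in which building is monotone and each level must be served, is what the gadgets must exploit.

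For the construction, given a formula with variables $x_1,\dots,x_n$ and clauses $C_1,\dots,C_m$, I would build a layered $s$-$t$ network whose skeleton consists of existing unit-capacity arcs, augmented by potential arcs realizing the binary choices. For each variable $x_i$ I would install a \emph{selection gadget} offering two polarity arcs, one carrying the literal $x_i$ and one carrying $\bar x_i$; the skeleton is metered so that serving all the demands that depend on $x_i$ using a \emph{single} built polarity arc is cheap, whereas being forced to build \emph{both} polarity arcs costs an extra build. For each clause $C_j$ I would install a \emph{clause gadget} that must route one unit of demand, which can reach the sink through a capacity-$3$ potential arc only when at least one of its three literal arcs is already available; the capacity $3$ is used precisely so that this single clause demand may be carried by any of the (up to three) satisfied literals feeding it. Thus the unit-capacity existing arcs fix and meter the skeleton while the capacity-$3$ potential arcs carry the logical choices, and a low-cost set of built arcs corresponds to a consistent truth assignment serving all clauses.

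I would then calibrate the path lengths so that the initial value $f$ and ultimate value $F=f+r$ are fixed by the skeleton, and so that an honest satisfying assignment lets every clause demand be routed after building only the $n$ selection arcs (one polarity each) together with the $m$ clause arcs, all built at the earliest possible levels; this makes each nested set $B_k$ attain its minimum possible size and hence makes $TF-\sum_k\lvert B_k\rvert$ maximal. Conversely, if the formula is unsatisfiable, then at some flow level no choice of one polarity per variable simultaneously serves all clauses, so any nested family must build at least one additional arc before reaching that level, strictly decreasing the cumulative flow. Setting the threshold $\Theta$ equal to the cumulative flow of an honest satisfying schedule, I would prove that a build schedule of cumulative flow at least $\Theta$ exists if and only if the formula is satisfiable.

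The main obstacle is the faithful design of the selection gadget under the monotonicity of building: because arcs are never removed, the reduction must rule out schedules that ``cheat'' by building extra arcs (for instance both polarities of a variable) to clear an intermediate level and still beat $\Theta$. The delicate part is tuning the capacities $1$ and $3$ and the path lengths so that any inconsistency forces at least one extra build at some level, and so that the arithmetic of $\Theta$ separates consistent from inconsistent schedules by a clean margin while respecting both the nesting constraint $B_k\subseteq B_{k+1}$ and the freedom in the order of builds. Establishing, in particular, that an optimal schedule may always be assumed to correspond to a single consistent assignment — and carrying both directions of the equivalence through this normalization — is where the bulk of the verification lies.
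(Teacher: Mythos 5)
There is a genuine gap: your write-up is a plan, not a proof, and the plan defers exactly the step on which such reductions live or die. You never construct the selection or clause gadgets; you only state properties they should have (``metered so that serving all the demands\dots is cheap, whereas being forced to build both polarity arcs costs an extra build''), and you concede in your final paragraph that ruling out ``cheating'' schedules and normalizing optimal schedules to consistent assignments ``is where the bulk of the verification lies.'' That is precisely the part that tends to fail for \textsc{3-Sat}-based reductions to flow problems: maximum flow has no native exclusive-or, so nothing prevents a schedule from routing clause demands through \emph{both} polarities of a variable, and the cumulative-flow objective only penalizes this if the timing arithmetic is engineered so that every inconsistent schedule provably delays some flow increment --- a calibration you do not carry out. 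There are also unaddressed frictions with the restricted capacity classes: a polarity arc of capacity $3$ caps each literal at three clause occurrences (so you would at least need a bounded-occurrence variant of \textsc{3-Sat}), and your unit clause demands through capacity-$3$ potential arcs leave slack that a cheating schedule could exploit. As it stands, the two directions of the iff are asserted, not established.

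The paper avoids all of this by reducing from \emph{Exact Cover by 3-sets} instead, where no gadgetry is needed: a potential capacity-$3$ arc $s\to v_S$ for each set $S$, existing unit arcs $v_S\to w_i$ for $i\in S$, and existing unit arcs $w_i\to t$. Then the flow in period $k$ is trivially at most $3\min\{n,k-1\}$, since at most $k-1$ potential arcs exist and each contributes at most $3$, and the cumulative bound $3\left[0+1+\cdots+(n-1)\right]+3n(T-n)$ is met if and only if the first $n$ built sets are pairwise disjoint and cover $U$, i.e., form an exact cover --- each built arc must yield a full increment of $3$, which forces disjointness with everything built earlier. The moral is that the capacity-$3$ potential arcs are matched to the $3$-element sets of X3C, not to clause width: the ``one arc per period, each worth at most $3$'' structure makes the optimality condition coincide with the covering condition with no normalization argument at all. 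If you want to salvage your route, you would need to exhibit explicit gadgets and prove the separation margin for $\Theta$; reducing from X3C dissolves the problem instead.
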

\begin{proof}
We use reduction from the problem \emph{Exact Cover by 3-sets}
(X3C). An instance of X3C is given by a collection $\mathcal S$ of
3-element subsets of the set $U=\{1,2,\ldots,3n\}$, and the problem
is to decide if there are $n$ sets $S_1,\ldots,S_n\in\mathcal S$
such that $U=S_1\cup\cdots\cup S_n$. Such an X3C instance can be
reduced to the following IMFP instance. The node set is
\[ N=\{s,t\}\cup\{v_S\ :\ S\in\mathcal S\}\cup\{w_1,\ldots,w_{3n}\}.\]
For every $S\in\mathcal S$, there is a potential arc with capacity
$3$ from $s$ to $v_S$, and for every $i\in U$ there is an existing
arc of capacity $1$ from $w_i$ to $t$. Moreover, for every
$S=\{i,j,k\}\in\mathcal S$ there are three existing arcs
$(v_S,w_i)$, $(v_S,w_j)$ and $(v_S,w_k)$ with unit capacities.
Clearly the flow in time period $k$ is at most
$3\cdot\min\{n,k-1\}$, hence the objective value is bounded by
\[3\cdot[0+1+\cdots+(n-1)]+3n(T-n)\]
and this upper bound can be achieved if and only if the underlying
X3C instance is a YES-instance.

\end{proof}

\noindent \textbf{Approximation for the incremental maximum matching
problem}

\vspace{12pt} \noindent In order to derive a performance guarantee
for \textit{Quickest-to-ultimate} on instances of the incremental
maximum matching problem, we strengthen
Lemma~\ref{lem:z_i_by_lambda_resp_mu}. We need the following
auxiliary result.
\begin{lemma}\label{lem:averaging_is_best}
For real numbers $\alpha_1\geqslant\alpha_2\geqslant\cdots\geqslant\alpha_n$ and $0\leqslant\beta_1\leqslant\beta_2\leqslant\cdots\leqslant\beta_n$ with $\sum_{i=1}^n\beta_i=B$, we have
\[\sum_{i=1}^n\alpha_i\beta_i\leqslant\frac{B}{n}\sum_{i=1}^n\alpha_i.\]
\end{lemma}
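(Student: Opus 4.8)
The plan is to recognize this statement as the oppositely-sorted version of Chebyshev's sum inequality: when one sequence is non-increasing and the other is non-decreasing, the ``correlated'' sum $\sum_i\alpha_i\beta_i$ is dominated by the product of the averages. The cleanest route is through an antisymmetric double sum, which converts the two monotonicity hypotheses directly into a single sign condition, so I would take that approach rather than an exchange or induction argument.

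First I would introduce the double sum $S=\sum_{i=1}^n\sum_{j=1}^n(\alpha_i-\alpha_j)(\beta_i-\beta_j)$. The key observation is that every summand is non-positive: for any pair $(i,j)$ the two factors have opposite signs, since $\alpha$ is non-increasing while $\beta$ is non-decreasing. Concretely, if $i<j$ then $\alpha_i-\alpha_j\geqslant 0$ and $\beta_i-\beta_j\leqslant 0$, the reverse holds for $i>j$, and the diagonal $i=j$ contributes zero. Hence $S\leqslant 0$.

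Next I would expand $S$ and collect terms. Summing $\alpha_i\beta_i$ (and symmetrically $\alpha_j\beta_j$) over the free index produces $2n\sum_i\alpha_i\beta_i$, while the two cross terms $\alpha_i\beta_j$ and $\alpha_j\beta_i$ each factor as $\bigl(\sum_i\alpha_i\bigr)\bigl(\sum_i\beta_i\bigr)$. This gives the identity $S=2n\sum_i\alpha_i\beta_i-2\bigl(\sum_i\alpha_i\bigr)\bigl(\sum_i\beta_i\bigr)$. Combining with $S\leqslant 0$ and substituting $\sum_i\beta_i=B$ yields $n\sum_i\alpha_i\beta_i\leqslant B\sum_i\alpha_i$, and dividing by $n$ delivers the claimed bound.

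I do not expect a substantial obstacle: the only points requiring care are the bookkeeping in the expansion of $S$ and the sign argument for the summands, both of which are routine. It is worth remarking that the hypothesis $\beta_1\geqslant 0$ is never used—only the monotonicity of the two sequences enters—so the proof goes through verbatim even without the nonnegativity assumption stated in the lemma.
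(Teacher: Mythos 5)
Your proof is correct, and it takes a genuinely different route from the paper. You prove the statement as the oppositely-ordered case of Chebyshev's sum inequality, via the antisymmetric double sum $S=\sum_{i,j}(\alpha_i-\alpha_j)(\beta_i-\beta_j)\leqslant 0$ and the identity $S=2n\sum_i\alpha_i\beta_i-2\bigl(\sum_i\alpha_i\bigr)\bigl(\sum_i\beta_i\bigr)$; both the sign argument and the expansion check out. The paper instead treats the inequality as a linear program: it maximizes $\sum_i\alpha_ix_i$ over the polytope of non-negative, non-decreasing vectors with $\sum_ix_i=B$, passes to the dual $\min\{Bz: y_i-y_{i-1}+z\geqslant\alpha_i,\ y_0=y_n=0,\ y_i\geqslant 0\}$, and exhibits the explicit dual certificate $z=\frac1n\sum_i\alpha_i$, $y_i=(n-i)z-\sum_{j=i+1}^n\alpha_j$ (feasible because each tail average of a non-increasing sequence is at most the overall average). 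Your argument is more elementary and self-contained, and your closing remark is accurate: the hypothesis $\beta_1\geqslant 0$ is not needed for your proof, so you in fact establish a slightly more general statement, whereas the paper's feasible region builds in non-negativity. What the paper's duality formulation buys is consistency with the method used throughout Section 4 (upper bounds via dual certificates, as in Lemmas \ref{lem:lambda_incr} and \ref{lem:mu_bound_by_c}) and the structural reading that the uniform vector $\beta_i=B/n$ is an optimal point of that polytope, which is the form in which the lemma is applied in Lemma \ref{lem:improved_bound_for_z1}.
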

\begin{proof}
By duality,
\begin{multline*}
\max\left\{\sum_{i=1}^n\alpha_ix_i\ :\ x_i\leqslant x_{i+1}\text{ for }1\leqslant i\leqslant n-1,\ \sum_{i=1}^nx_i=B,\ x_i\geqslant 0\text{ for }1\leqslant i\leqslant n\right\} \\
=\min\left\{Bz\ :\ y_i-y_{i-1}+z\geqslant\alpha_i\text{ for }1\leqslant i\leqslant n,\ y_0=y_n=0,\ y_i\geqslant 0\text{ for }1\leqslant i\leqslant n-1 \right\},
\end{multline*}
and a feasible solution for the minimization problem on the RHS is
given by $z=1/n\sum_{i=1}^n\alpha_i$ and
\[
y_{i}=(n-i)z-\sum_{j=i+1}^n\alpha_j\qquad\text{for }1\leqslant i\leqslant n-1.\qedhere
\]
\end{proof}
\begin{lemma}\label{lem:improved_bound_for_z1}
For the incremental maximum matching problem,
\[z_1\geqslant
\begin{cases}
TF-\frac12c_r(r+1) & \text{if }f\geqslant r, \\
TF-\frac14\left[c_rF+r(r-f)+2c_r\right] & \text{if }f<r.
\end{cases}\]
\end{lemma}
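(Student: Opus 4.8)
The plan is to start from the exact identity of Lemma~\ref{lem:z_i_by_lambda_resp_mu}, namely $z_1 = TF - S$ with $S := \sum_{i=0}^{r-1}\lambda_i(r-i)$, so that proving a lower bound on $z_1$ amounts to proving an upper bound on $S$. Throughout I would use three properties of the sequence $(\lambda_i)$: it is non-decreasing (Lemma~\ref{lem:lambda_incr}); it sums to $c_r$ (as recorded in the proof of Theorem~\ref{thm:approx_heur_1}, $c_r = \lambda_0 + \cdots + \lambda_{r-1}$); and each $\lambda_i$ is a positive integer, since every single increment of the matching size forces at least one potential arc to be built. The only genuinely matching-specific ingredient I would add is the inequality $c_r \le F$: in a minimum set of $c_r$ potential arcs that enables a matching of value $F$ every arc is necessary and therefore lies in every maximum matching, so these $c_r$ arcs are among the $F$ edges of one such matching.

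For the case $f \ge r$ I would invoke the averaging inequality of Lemma~\ref{lem:averaging_is_best} directly, with $n = r$, decreasing sequence $\alpha_i = r-i+1$, and increasing sequence $\beta_i = \lambda_{i-1}$ of total $B = c_r$. This gives $S \le \frac{c_r}{r}\sum_{i=1}^r(r-i+1) = \frac12 c_r(r+1)$, hence $z_1 \ge TF - \frac12 c_r(r+1)$. This estimate is in fact valid for every $f$; the role of the case split is only that it is the sharper of the two bounds exactly when $f \ge r$.

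For the case $f < r$ the averaging inequality alone is insufficient: in the real relaxation the flat profile $\lambda_i \equiv c_r/r$ is admissible and reproduces only the weaker bound $\frac12 c_r(r+1)$, so here integrality must be exploited. Combining $c_r \le F$ with $f < r$ gives $c_r \le f+r < 2r$, so writing $e := c_r - r$ we have $0 \le e < r$, and since $\sum_i(\lambda_i - 1) = e$ at most $e$ indices carry $\lambda_i \ge 2$; as $(\lambda_i)$ is non-decreasing these are the final ones. Writing $S = \frac{r(r+1)}{2} + \sum_i(\lambda_i - 1)(r-i)$ and noting that the excess term is maximized by placing one unit of excess on each of the last $e$ coordinates, I would obtain $S \le \frac{r(r+1)}{2} + \frac{e(e+1)}{2}$.

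It remains to compare this quantity with the claimed bound. The target expression $\frac14[c_r F + r(r-f) + 2c_r]$ is non-decreasing in $f$ (its $f$-derivative is $\frac14 e \ge 0$), and the instance satisfies $f \ge e$ because $c_r \le F$; evaluating it at the smallest admissible value $f = e$ and subtracting, the difference from $\frac{r(r+1)}{2} + \frac{e(e+1)}{2}$ works out to $\frac14 e(e-r) \le 0$, which yields $S \le \frac14[c_r F + r(r-f) + 2c_r]$ and hence the second case. I expect the main obstacle to be precisely this final step: recognizing that the gain over the averaging bound is driven entirely by integrality together with $c_r \le F$, and carrying out the bookkeeping that turns the two-valued extremal profile into the exact coefficients $\frac14[c_r F + r(r-f) + 2c_r]$.
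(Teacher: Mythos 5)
Your proof is correct, and in the case $f\geqslant r$ it is the same as the paper's: apply Lemma~\ref{lem:averaging_is_best} to the non-decreasing sequence $(\lambda_i)$ with $\sum_{i=0}^{r-1}\lambda_i=c_r$ inside the identity of Lemma~\ref{lem:z_i_by_lambda_resp_mu}. In the case $f<r$, however, you take a genuinely different route. The paper's matching-specific ingredient is structural: while the current matching has size below $F/2$, some edge of the ultimate maximum matching has both endpoints exposed, so each of the first $\rho=\lceil(r-f)/2\rceil$ increments costs exactly one arc, i.e.\ $\lambda_i=1$ for $i<\rho$ (the ``$\lambda_i=0$'' in the paper is a typo), and averaging is then applied to the remaining $\lambda_i$, whose total is $c_r-\rho$. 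Your matching-specific ingredient is instead the aggregate inequality $c_r\leqslant F$ --- which is true, and provable even more directly than via your ``every arc of a minimum set lies in every maximum matching'' argument: every $s$-$t$ path contains at most one potential arc, so the potential edges of any ultimate maximum matching already support flow $F$ --- combined with integrality $\lambda_i\geqslant 1$, monotonicity, and $\sum_i\lambda_i=c_r$. Writing $e=c_r-r$, so that $0\leqslant e\leqslant f<r$, your extremal-profile bound
\[
\sum_{i=0}^{r-1}\lambda_i(r-i)\ \leqslant\ \frac{r(r+1)}{2}+\frac{e(e+1)}{2}
\]
is correct, and the closing algebra checks out: the target $\frac14\left[c_rF+r(r-f)+2c_r\right]$ has $f$-slope $e/4\geqslant 0$ and exceeds your bound at $f=e$ by $e(r-e)/4\geqslant 0$, with equality throughout when $e=0$. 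The only step you should write out in full is the extremal-profile claim itself; it follows from the prefix bound $\sum_{i\leqslant m}(\lambda_i-1)\leqslant\max\{0,\ m+1-(r-e)\}$ (if $\lambda_i\geqslant 2$ for some $i\leqslant m$, monotonicity forces $\lambda_j\geqslant 2$ for all $j\geqslant m$, so the excess already places $r-1-m$ units strictly after position $m$), summed over $m$ via the identity $\sum_i(\lambda_i-1)(r-i)=\sum_{m=0}^{r-1}\sum_{i\leqslant m}(\lambda_i-1)$. As for what each approach buys: the paper's exposed-edge argument yields finer local information (the first $\rho$ increments are unit-cost), which is potentially reusable elsewhere, while yours is more elementary, uses only aggregate data about $(\lambda_i)$, correctly isolates integrality as the reason the plain averaging bound can be beaten when $f<r$, and incidentally sidesteps the typos in the paper's computation (the stray ``$t$'' in the averaging display, and the duplicated condition ``$\rho=(r-f)/2$'', the second instance of which should read $\rho=(r-f+1)/2$).
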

\begin{proof}
Recall that $\sum_{i=0}^{r-1}\lambda_i=c_r$.
Lemmas~\ref{lem:z_i_by_lambda_resp_mu}
and~\ref{lem:averaging_is_best} yield
\begin{multline*}
z_1\geqslant TF-\sum_{i=0}^r\lambda_i(r-i)\geqslant TF-\frac{c_r}{r-1}\sum_{i=0}^r(r-i)=TF-\frac{c_r}{r}\left(r^2-\frac12r(r-1)\right)\\
=TF-\frac12c_r(r+1).
\end{multline*}
For the case $r>f$ we define $\rho=\max\{0,\lceil(r-f)/2\rceil\}$.
Note that $f+\rho-1<f+(r-f)/2=F/2$. This implies that in the first
$\rho$ time periods, there is always a potential arc which is used
in the ultimate maximum matching, and is not adjacent to any arc in
the current maximum matching. Hence $\lambda_i=0$ for $0\leqslant
i\leqslant\rho-1$. From Lemma~\ref{lem:z_i_by_lambda_resp_mu} we
obtain
\[
z_1\geqslant TF-\sum_{i=0}^{\rho-1}(r-i)-\sum_{i=\rho+1}^{r-1}\lambda_i(r-i).
\]
Using Lemma~\ref{lem:averaging_is_best} and
$\displaystyle\sum_{i=\rho}^{r-1}\lambda_i=c_r-\rho$, this implies
\begin{multline*}
z_1\geqslant TF-r\rho+\frac{\rho(\rho-1)}{2}-\frac{c_r-\rho}{r-\rho}\sum_{i=\rho+1}^{r-1}(r-i)\\
=TF-r\rho+\frac{\rho(\rho-1)}{2}-\frac{c_r-\rho}{r-\rho}\left[r(r-\rho)-\frac12(r-t)(r+t-1)\right]\\
=TF-r\rho+\frac{\rho(\rho-1)}{2}-\frac{(c_r-\rho)(r-\rho+1)}{2}.
\end{multline*}
If $\rho=(r-f)/2$ then
\begin{multline*}
z_1\geqslant TF-\frac12r(r-f)+\frac{(r-f)(r-f-2)}{8}-\frac{(c_r-(r-f)/2)(r-(r-f)/2+1)}{2}\\
=TF-\frac14\left[c_rF+r(r-f)+2c_r)\right],
\end{multline*}
and if $\rho=(r-f)/2$ then
\begin{multline*}
z_1\geqslant TF-\frac12r(r-f+1)+\frac{(r-f+1)(r-f-1)}{8}-\frac{(c_r-(r-f+1)/2)(r-(r-f+1)/2+1)}{2}\\
=TF-\frac14\left[c_rF+r(r-f)+c_r+r)\right].
\end{multline*}
Now the claim follows from $c_r\geqslant r$.
\end{proof}
\begin{proposition}\label{prop:upper_bound}
\textit{Quickest-to-ultimate} is a $4/3$-approximation algorithm for
the incremental maximum matching problem.
\end{proposition}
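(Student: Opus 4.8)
The plan is to combine the two lower bounds on $z_1$ from Lemma~\ref{lem:improved_bound_for_z1} with the upper bound $z^*\leqslant TF-\tfrac12 r(r-1)-c_r$ from Lemma~\ref{lem:bound_opt_by_c}, and to verify in each of the two regimes $f\geqslant r$ and $f<r$ that $4z_1-3z^*\geqslant 0$, which is precisely the desired inequality $z^*\leqslant\tfrac43 z_1$. The only auxiliary facts I expect to need are the identity $F=f+r$, the trivial bound $c_r\geqslant r$, and the observation that the hypothesis $T>\lvert A_p\rvert$ together with $c_r\leqslant\lvert A_p\rvert$ (since $c_r$ counts potential arcs) forces $T\geqslant c_r+1$.

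First I would treat the case $f\geqslant r$. Substituting $z_1\geqslant TF-\tfrac12 c_r(r+1)$ and the upper bound for $z^*$ into $4z_1-3z^*$ and cancelling the $TF$ terms reduces the claim to $TF\geqslant 2c_rr-c_r-\tfrac32 r(r-1)$. Here $f\geqslant r$ gives $F=f+r\geqslant 2r$, and with $T\geqslant c_r+1$ we get $TF\geqslant 2r(c_r+1)=2c_rr+2r$, which already exceeds the right-hand side because $-c_r-\tfrac32 r(r-1)\leqslant 0$.

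For the case $f<r$ I would instead use $z_1\geqslant TF-\tfrac14\bigl[c_rF+r(r-f)+2c_r\bigr]$. The same substitution and cancellation reduce $4z_1-3z^*\geqslant 0$ to $TF\geqslant c_rF+r(r-f)-c_r-\tfrac32 r(r-1)$. Applying $T\geqslant c_r+1$ gives $TF\geqslant c_rF+F$, so it suffices to establish $F+c_r+\tfrac32 r(r-1)\geqslant r(r-f)$. Writing $F=f+r$ and using $c_r\geqslant r$, the left-hand side minus the right-hand side simplifies to $(f+\tfrac12 r)(r+1)\geqslant 0$, which settles the case.

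The reductions themselves are purely mechanical, so the only delicate point is the book-keeping in the case $f<r$: substituting $F=f+r$ at the right moment and, crucially, applying $c_r\geqslant r$ only after the $c_rF$ term has been cleared against $(c_r+1)F$, so that the residual inequality factors cleanly as $(f+\tfrac12 r)(r+1)\geqslant 0$ rather than degenerating into genuine small-$r$ edge cases. Beyond this algebraic care I do not anticipate any real obstacle, since both cases collapse to manifestly nonnegative expressions.
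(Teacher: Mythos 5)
Your proposal is correct and follows essentially the same route as the paper's proof: both combine the upper bound of Lemma~\ref{lem:bound_opt_by_c} with the two-case lower bound of Lemma~\ref{lem:improved_bound_for_z1}, split on $f\geqslant r$ versus $f<r$, and close with $T\geqslant c_r+1$ (which you, unlike the paper, explicitly justify via $T>\lvert A_p\rvert\geqslant c_r$), $F=f+r$, and $c_r\geqslant r$. The only differences are cosmetic: you verify $4z_1-3z^*\geqslant 0$ rather than $z^*-\tfrac43 z_1\leqslant 0$, and your second case ends in the clean factorization $\left(f+\tfrac12 r\right)(r+1)\geqslant 0$ where the paper's chain terminates at $\tfrac{r}{6}-\tfrac{c_r}{3}\leqslant 0$.
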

\begin{proof}
We need to show that $z^*-\frac43z_1\leqslant 0$, and we distinguish
two cases.
\begin{description}
\item[Case 1.] $r\leqslant f$, i.e. $\rho=0$. Using Lemmas~\ref{lem:bound_opt_by_c} and~\ref{lem:improved_bound_for_z1}, we obtain
\[z^*-\frac43z_1\leqslant TF-\sum_{j=1}^rc_j\leqslant TF-\frac{r(r-1)}{2}-c_r-\frac43\left[TF-\frac{c_r(r+1)}{2}\right]\leqslant\frac{2c_rr}{3}-\frac{TF}{3}.\]
The required inequality follows from $T\geqslant c_r$ and $F=f+r\geqslant 2r$.
\item[Case 2.] $r>f$. Lemma~\ref{lem:improved_bound_for_z1} implies
  \begin{multline*}
z^*-\frac43z_1\leqslant TF-\sum_{j=1}^rc_j\\
\leqslant TF-\frac{r(r-1)}{2}-c_r-\frac43\left[TF-\frac14\left[c_rF+r(r-f)+2c_r\right]\right]\\
= -\frac{TF}{3}-\frac{r(r-1)}{2}-c_r+\frac{c_rF}{3}+\frac{r(r-f)}{3}+\frac{2c_r}{3}.
 \end{multline*}
Using $T\geqslant c_r+1$ and then substituting $F=f+r$, we obtain
  \begin{multline*}
z^*-\frac43z_1\leqslant -\frac{c_rF}{3}-\frac{F}{3}-\frac{r^2}{2}+\frac{r}{2}-c_r+\frac{c_rF}{3}+\frac{r^2}{3}-\frac{rf}{3}+\frac{2c_r}{3}\\
\leqslant -\frac{f}{3}-\frac{r}{3}+\frac{r}{2}-c_r-\frac{rf}{3}-\frac{c_r}{3}\leqslant \frac{r}{6}-\frac{c_r}{3}.
 \end{multline*}
Now the claim follows from $c_r\geqslant r$.\qedhere
\end{description}
\end{proof}

\end{document}